\renewcommand{\orcidID}[1]{\href{https://orcid.org/#1}{\includegraphics[scale=.03]{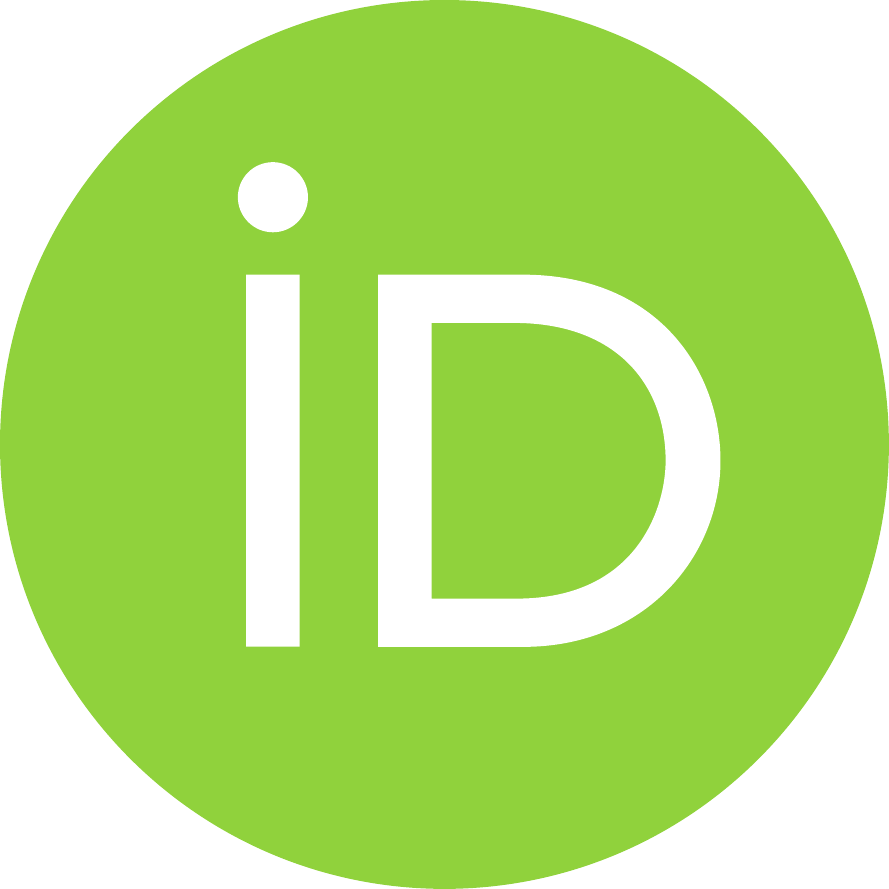}}} 
\Crefname{observation}{Observation}{Observations}
\Crefname{proposition}{Proposition}{Propositions}
\Crefname{claim}{Claim}{Claims}
\Crefname{property}{Property}{Properties}
\Crefname{enumi}{Property}{Properties}
\title{On the Complexity of the Storyplan Problem\thanks{Research partially supported by MIUR grant 20174LF3T8 {\em ``AHeAD: efficient Algorithms for HArnessing networked Data''}, Progetto RICBA21LG {\em "Algoritmi, modelli e sistemi per la rappresentazione visuale di reti"}, and the Vienna Science and Technology Fund (WWTF) grant ICT19-035.}}
\titlerunning{The Storyplan Problem}
\author{Carla Binucci\inst{1}\orcidID{0000-0002-5320-9110} \and Emilio Di~Giacomo\inst{1}\orcidID{0000-0002-9794-1928} \and William J. Lenhart\inst{2}\orcidID{0000-0002-8618-2444} \and \\Giuseppe Liotta\inst{1}\orcidID{0000-0002-2886-9694} \and Fabrizio Montecchiani\inst{1}\orcidID{0000-0002-0543-8912} \and Martin N\"ollenburg\inst{3}\orcidID{0000-0003-0454-3937}  \and Antonios Symvonis\inst{4}\orcidID{0000-0002-0280-741X}}
\authorrunning{C. Binucci et al.}
\institute{Department of Engineering, University of Perugia, Italy\\
	\email{name.surname@unipg.it}
	\and Department of Computer Science, Williams College, USA\\
	\email{wlenhart@williams.edu}
	\and Algorithms and Complexity Group, TU Wien, Vienna, Austria\\
	\email{noellenburg@ac.tuwien.ac.at }
	\and School of Applied Mathematical \& Physical Sciences, NTUA, Greece\\
	\email{symvonis@math.ntua.gr}
}
\newcommand{\story}{storyplan\xspace}
\newcommand{\storyp}{\textsc{StoryPlan}\xspace}
\newcommand{\storypfo}{\textsc{StoryPlanFixedOrder}\xspace}
\newcommand{\fw}{\text{\normalfont fw}}
\newcommand{\pw}{\text{\normalfont pw}}
\newcommand{\vc}{\kappa}
\newcommand{\fes}{\psi}
\newcommand{\ethlong}{Exponential Time Hypothesis\xspace}
\newcommand{\ethshort}{\textsc{ETH}\xspace}
\newcommand{\NP}{\textsf{NP}\xspace}
\newcommand{\XP}{\textsf{XP}\xspace}
\newcommand{\FPT}{\textsf{FPT}\xspace}
\newcommand{\sat}{\textsc{3SAT}\xspace}
\newcommand{\onesat}{\textsc{One-In-Three 3SAT}\xspace}
\newcommand{\sefe}{\textsc{SEFE}\xspace}
\newcommand{\ssefe}{\textsc{Sunflower SEFE}\xspace}
\begin{document}

\maketitle

\begin{abstract}
Motivated by dynamic graph visualization, we study the problem of representing a graph $G$ in the form of a \emph{\story}, that is, a sequence of frames with the following properties. Each frame is a planar drawing of the subgraph of $G$ induced by a suitably defined subset of its vertices. Between two consecutive frames, a new vertex appears while some other vertices may disappear, namely those whose incident edges have already been drawn in at least one frame. In a \story, each vertex appears and disappears exactly once. For a vertex (edge) visible in a sequence of consecutive frames, the point (curve) representing it does not change throughout the sequence. 

Note that the order in which the vertices of $G$ appear in the sequence of frames is a total order. 
In the \storyp problem, we are given a graph and we want to decide whether there exists a total order of its vertices for which a \story exists. We prove that the problem is \NP-complete, and complement this hardness with two parameterized algorithms, one in the vertex cover number and one in the feedback edge set number of $G$. Also, we prove that partial $3$-trees always admit a \story, which can be computed in linear time. Finally, we show that the problem remains \NP-complete in the case in which the total order of the vertices is given as part of the input and we have to choose how to draw the frames. 
\begin{keywords}
Dynamic Graph Drawing \and NP-hardness \and Parameterized Analysis \and Pathwidth
\end{keywords}
\end{abstract}

\section{Introduction}
Let $G=(V,E)$ be a graph with $n$ vertices. We write $[n]$ as shorthand for the set $\{1,2,\dots,n\}$. A \emph{\story} $\mathcal{S}=\langle \tau, \{D_i\}_{i\in [n]} \rangle$ of $G$ is a pair defined as follows. The first element is a bijection $\tau: V \rightarrow [n]$ that represents a total order of the vertices of $G$. For a vertex $v \in V$, let $i_v= \tau(v)$ and let $j_v = \max_{u \in N[v]} \tau(u)$, where $N[v]$ is the set containing $v$ and its neighbors. The \emph{lifespan} of $v$ is the interval $[i_v,j_v]$. We say that $v$ \emph{appears} at step $i_v$, 
 is \emph{visible} at \emph{step} $i$ for each $i \in [i_v,j_v]$, and \emph{disappears} at step $j_v+1$. Note that a vertex does not disappear until all its neighbors have appeared. The second element of $\mathcal{S}$ is a sequence of drawings $\{D_i\}_{i\in [n]}$, such that: (i) each drawing $D_i$ contains all vertices  visible at step $i$, (ii) each drawing $D_i$ is planar, (iii) the point representing a vertex $v$ is the same over all drawings that contain $v$ (i.e., it does not change during the lifespan of $v$), and (iv) the curve representing an edge $e$ is the same over all drawings that contain $e$.  We introduce the \storyp problem.

\medskip\noindent\fbox{%
  \parbox{0.95\textwidth}{
    \storyp\\
    \textbf{Input:} Graph $G=(V,E)$\\
    \textbf{Question:} Does $G$ admit a \story?
  }%
}

\medskip\noindent In what follows, each drawing $D_i$ of a \story $\mathcal{S}$  is called a \emph{frame} of $\mathcal{S}$. Also, we denote by $|D_i|$  the number of vertices of $D_i$, while the \emph{width} of $\mathcal{S}$ is $w(\mathcal{S})=\max_{i \in [n]}|D_i|-1$ (we subtract one to align the definition with other width parameters).  If $G$  admits a \story, then the \emph{framewidth} of $G$, denoted by $\fw(G)$, is the minimum width over all its \story{s}; otherwise the framewidth of $G$ is conventionally set to $+\infty$. We will observe that the framewidth of $G$ upper bounds its pathwidth~\cite{DBLP:journals/jct/RobertsonS83}, since each frame can be interpreted as a bag of a path decomposition with the addition of conditions (ii)--(iv).

\medskip\noindent\textbf{Motivation and related work.} Testing for the existence of a \story of a graph generalizes planarity and it is of theoretical interest as it combines classical width parameters of graphs with topological properties. From a more practical perspective, computing a \story (if any) of a graph $G$  is a natural way to gradually visualize $G$ in a story-like or small-multiples fashion, such that each single drawing is planar and the reader's mental map is preserved throughout the sequence of drawings (see, e.g.,~\cite{DBLP:journals/vlc/GiacomoDLMT14} for a similar approach). More in general, the problem of visualizing graphs that change over time has motivated a notable amount of literature in graph drawing and network visualization (see, e.g.,~\cite{DBLP:conf/vissym/0001B0W14,DBLP:journals/ipl/BinucciBBDGPPSZ12,DBLP:journals/jgaa/BorrazzoLBFP20,DALOZZO20191,DBLP:conf/gd/SkambathT16,DBLP:journals/tvcg/VehlowBW16}).  
While numerous dynamic graph visualization models have been proposed, two works are of particular interest for our research. The first one is the work by Borrazzo et al.~\cite{DBLP:journals/jgaa/BorrazzoLBFP20}, in which the following problem is introduced. A \emph{graph story} is formed by a graph $G$, a total order of its vertices $\tau$, and a positive integer $W$. The problem is to find a sequence of drawings $\{D_i\}_{i \in [n]}$ in which  each  $D_i$ contains all vertices $v$ such that $i-W < \tau(v) \le i$, and the position of a vertex is the same over all drawings it belongs to. Borrazzo et al. prove that any story of a path or a tree can be drawn on a $2W \times 2W$ and on an $(8W +1)\times(8W +1)$ grid, respectively, so that all the drawings of the story are straight-line and planar. Note that having a fixed window of size $W$ implies that at most $O(W \cdot n)$ edges of $G$ can be represented, in particular, any edge whose endpoints are at distance larger than $W$ in $\tau$ does not appear in any drawing. Having both a fixed order and a fixed lifespan are the key differences with our setting. In particular,  unconstrained lifespans allow us to find stories in which all edges are drawn in at least one step, while planarity still guarantees that even large frames are readable. Besides such differences in the models, our focus is on the complexity of the decision problem, rather than on area bounds for specific graph families. 
The second work is by Da Lozzo and Rutter~\cite{DALOZZO20191}, who introduce \emph{stream planarity}. Given a graph $G$, a total order $\tau$ of the \emph{edges} of $G$, and a positive integer $W$, stream planarity asks for a sequence of drawings $\{D_i\}_{i \in [n]}$ in which  each  $D_i$ contains all edges $e$ such that $i-W < \tau(e) \le i$, and the subdrawing of the vertices and edges shared by $D_i$ and $D_{i-1}$ is the same in both drawings. Da Lozzo and Rutter prove that there exists a constant value for $W$ for which the stream planarity problem is \NP-complete. They also study a variant where a backbone graph is given whose edges must stay in the drawing at each time step; for this variant they prove that the problem is \NP-complete for all $W \ge 2$ and can be solved in polynomial time when $W=1$ or when the backbone graph is biconnected. The difference of stream planarity with our problem, besides the fact that edges are streamed rather than vertices, is again having a fixed order and a fixed lifespan. 

\bigskip\noindent\textbf{Contribution.} The main results in this paper can be summarized as follows.

\begin{itemize}
\item We show that \storyp is \NP-complete (\cref{sse:hardness}). As we reduce from \onesat and we blow up the instance by a linear factor, it  follows that there is no algorithm that solves \storyp in $2^{o(n)}$ time unless \ethshort fails. On the other hand, such a lower bound can be complemented with a simple algorithm running in $2^{O(n \log n)}$ time.

\item Motivated by the above hardness, we study the parameterized complexity of \storyp  and  describe two fixed-parameter tractable algorithms. We first show that \storyp belongs to \FPT when parameterized by the vertex cover number  via the existence of a kernel, whose size is however super-polynomial (\cref{sse:vc}). We then prove that \storyp parameterized by the feedback edge set number (i.e., the minimum number of edges whose removal makes the graph acyclic) admits a kernel of linear size (\cref{sse:fes}). 

\item  In parameterized analysis, a central parameter to consider is treewidth. In this direction, finding a parameterized algorithm for \storyp appears to be an elusive task. However, we show that for partial $3$-trees, a \story always exists and can be computed in linear time (\cref{sse:threetrees}).

\item Finally, we initiate the study of the complexity of a variant of \storyp in which the total order of the vertices is fixed in advance (but the vertex lifespan remains unconstrained).  We prove \NP-completeness for this problem via a reduction from \ssefe~\cite{DBLP:journals/jgaa/Schaefer13} (\cref{se:storypfo}). 
\end{itemize}


\noindent Some proofs are in the appendix and the corresponding statements are marked~($\star$).

\section{Preliminaries and Basic Results}\label{se:preliminaries}
A \emph{drawing} $\Gamma$ of a graph $G=(V,E)$ is a mapping of the vertices of $V$ to points in the plane $\mathbb{R}^2$, and of the edges of $E$ to Jordan arcs connecting their corresponding endpoints but not passing through any other vertex.
Drawing $\Gamma$ is \emph{planar} if no edge is crossed. A graph is \emph{planar} if it admits a planar drawing. 
A planar drawing of a planar graph $G$ subdivides the plane into topologically connected regions, called \emph{faces}. The infinite region is the \emph{outer face}. A \emph{planar embedding} $\mathcal E$ of $G$ is an equivalence class of planar drawings that define the same set of faces and the same outer face. 
For any $V' \subseteq V$, we denote by $G[V']$ the subgraph of $G$ induced by the vertices of $V'$ and by $\Gamma[V']$ the subdrawing of $\Gamma$ representing $G[V']$.

\bigskip\noindent\textbf{Connection with pathwidth.} The next properties show some simple connections between \story{s} and path decompositions~\cite{DBLP:journals/jct/RobertsonS83}. 

\begin{restatable}[$\star$]{theorem}{thpathwidth}\label{th:pathwidth}
Let $G=(V,E)$ be a graph, then $\pw(G) \le \fw(G)$. 
Also, if $G$ is planar then it always admits a \story, and in particular $\pw(G)=\fw(G)$.
\end{restatable}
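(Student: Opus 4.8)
The statement bundles three claims: the inequality $\pw(G)\le\fw(G)$ for every graph, the existence of a \story for every planar graph, and the matching equality $\pw(G)=\fw(G)$ in the planar case. For the inequality, I would read a path decomposition directly off the lifespans of an arbitrary \story $\mathcal{S}=\langle\tau,\{D_i\}_{i\in[n]}\rangle$: let the $i$-th bag $B_i$ be the set of vertices visible at step $i$. Contiguity is immediate, since $v$ is visible exactly on the interval $[i_v,j_v]$. Every edge $uv$ with $\tau(u)\le\tau(v)$ lies in $B_{\tau(v)}$, because $j_u=\max_{w\in N[u]}\tau(w)\ge\tau(v)$, so both endpoints are simultaneously visible at step $\tau(v)$. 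As condition (i) forces $D_i$ to contain all visible vertices, $|B_i|\le|D_i|$, so this decomposition has width at most $w(\mathcal{S})$; minimizing over all \story{s} gives $\pw(G)\le\fw(G)$.

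For the planar case the key device is to fix once and for all a planar drawing $\Gamma$ of $G$ and let every frame be a subdrawing of it. Concretely, for any total order $\tau$ I would set $D_i=\Gamma[B_i]$, where $B_i$ is again the set of vertices visible at step $i$. Each $D_i$ is planar as a subdrawing of a planar drawing, and conditions (iii)--(iv) hold for free since the point of each vertex and the curve of each edge are inherited unchanged from $\Gamma$. This already establishes the middle claim: every planar graph admits a \story, obtainable from any order $\tau$. Moreover, it reduces the width question to a purely combinatorial one, namely choosing $\tau$ so as to minimize $\max_i|B_i|$.

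It remains to prove $\fw(G)\le\pw(G)$, which I expect to be the main obstacle. The plan is to extract $\tau$ from an optimal path decomposition and bound $\max_i|B_i|$ by $\pw(G)+1$. I find it cleanest to use the interval-graph view of pathwidth (interval thickness minus one): for an order $v_1,\dots,v_n$ the lifespan of $v_j$ is the interval $I_j=[j,\,r_j]$ with $r_j=\max_{w\in N[v_j]}\tau(w)$, and $|B_i|$ is exactly the number of these intervals containing the point $i$. Starting from an interval supergraph $H\supseteq G$ whose maximum clique size $\omega(H)$ equals $\pw(G)+1$, with distinct left endpoints, I would let $\tau$ be the left-to-right order of the vertices' $H$-intervals, writing $[\ell_j,u_j]$ for the $H$-interval of $v_j$. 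The crux is one monotonicity estimate: if $v_j$ is visible at step $i$, then $r_j\ge i$ is realized by a closed neighbour $v_k$ (with $k=r_j$) whose $H$-interval overlaps that of $v_j$, so $u_j\ge\ell_k\ge\ell_i$, while $\ell_j\le\ell_i$ by the choice of order; hence the $H$-interval of $v_j$ contains $\ell_i$. Thus all vertices of $B_i$ have $H$-intervals through the common point $\ell_i$, giving $|B_i|\le\omega(H)=\pw(G)+1$.

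Combining this bound with the fixed-embedding frames of the second step yields a \story of width at most $\pw(G)$, so $\fw(G)\le\pw(G)$, which together with the first part gives the equality. The delicate point throughout is keeping the discrete indices of $\tau$ aligned with the continuous endpoints of the interval representation; once that alignment is set up, everything else is routine bookkeeping.
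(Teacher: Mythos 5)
Your proof is correct, and its skeleton matches the paper's: one direction reads a path decomposition off an arbitrary \story, and the planar direction fixes a single planar drawing $\Gamma$, takes every frame to be the subdrawing induced by the visible vertices, and extracts the order $\tau$ from an optimal path decomposition. The differences lie in the supporting machinery. For $\pw(G)\le\fw(G)$, the paper argues indirectly: it shows by contradiction that the width of a \story bounds the vertex separation number of $\tau$, and then invokes Kinnersley's theorem that vertex separation number equals pathwidth; your direct construction (bags equal to visible sets, contiguity because lifespans are intervals, edge $uv$ covered in the bag at step $\max\{\tau(u),\tau(v)\}$) is more elementary and avoids that citation. For $\fw(G)\le\pw(G)$ in the planar case, the paper takes a \emph{nice} path decomposition, orders vertices by introduction, and asserts that the resulting \story has width $\pw(G)$ without spelling out why the visible set at each step fits in a bag; you instead use the interval-supergraph characterization of pathwidth (ordering by left endpoints) and prove the width bound explicitly via the common-point/clique argument, which actually supplies the verification the paper leaves to the reader. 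Both characterizations are classical and equivalent, so nothing is lost either way. One small point worth stating explicitly: when $G$ admits no \story, $\fw(G)=+\infty$ and the inequality holds vacuously, a case the paper's proof mentions.
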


\noindent Since computing the pathwidth is \NP-hard already for planar graphs of bounded degree~\cite{DBLP:journals/tcs/MonienS88}, the next corollary immediately follows from \cref{th:pathwidth}.

\begin{corollary}
Computing the framewidth of a graph is \NP-hard for planar graphs of bounded degree.
\end{corollary}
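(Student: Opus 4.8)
\noindent The plan is to prove \cref{th:pathwidth} by establishing the two inequalities $\pw(G)\le\fw(G)$ and $\fw(G)\le\pw(G)$ separately, where the second is shown only for planar $G$ and, as a by-product, also yields the existence of a \story; the concluding corollary is then immediate. For the inequality $\pw(G)\le\fw(G)$, suppose $G$ admits a \story $\mathcal S=\langle\tau,\{D_i\}_{i\in[n]}\rangle$, and consider the sequence of bags $B_i=\{v\in V:\ i\in[i_v,j_v]\}$, i.e., $B_i$ collects the vertices visible at step $i$. I would check the three path-decomposition axioms directly: every $v$ lies in $B_{i_v}$; for an edge $uv$ with $\tau(u)<\tau(v)$ both endpoints lie in $B_{\tau(v)}$, because $j_u=\max_{w\in N[u]}\tau(w)\ge\tau(v)$ forces $u$ to still be visible at step $\tau(v)$; and the bags containing a fixed $v$ form the contiguous interval $[i_v,j_v]$. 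Since $B_i$ is contained in the vertex set of $D_i$, we get $\max_i|B_i|-1\le w(\mathcal S)$, and minimising over all \story{s} gives $\pw(G)\le\fw(G)$ (the case $\fw(G)=+\infty$ being trivial).

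\noindent For the reverse inequality in the planar case, I would start from an optimal path decomposition $(B_1,\dots,B_p)$ of width $k=\pw(G)$, order the vertices by the index of the bag in which they first appear (breaking ties inside a bag arbitrarily), and let $\tau$ be the resulting total order. The key lemma is that the story-bag $B^{\mathcal S}_i$ at step $i$ is contained in the decomposition bag $B_t$ that introduces the $i$-th vertex $v_i$. Indeed, any earlier vertex $v\in B^{\mathcal S}_i\setminus\{v_i\}$ satisfies $\tau(v)<i$, so by definition of its lifespan it has a neighbor $u$ with $\tau(u)\ge i$; hence $u$ is introduced at index $\ge t$, the decomposition bag containing both $u$ and $v$ has index $\ge t$, while $v$ is introduced at index $\le t$, and the connectivity of $v$'s interval yields $v\in B_t$. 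This gives $|B^{\mathcal S}_i|\le|B_t|\le k+1$, i.e., an order $\tau$ whose induced lifespans produce frames of at most $k+1$ vertices.

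\noindent It remains to realise the frames, and this is where planarity of $G$ enters; I expect it to be the only genuinely nontrivial point. The crucial observation is that a single fixed planar drawing $\Gamma$ of $G$ serves all frames at once: setting $D_i:=\Gamma[B^{\mathcal S}_i]$, each frame is a subdrawing of a planar drawing and is therefore planar, while conditions (iii) and (iv) hold for free because vertex positions and edge curves are inherited unchanged from $\Gamma$; moreover every edge $uv$ is drawn at step $\max\{\tau(u),\tau(v)\}$, when both endpoints are visible. This simultaneously shows that planar graphs always admit a \story and that $\fw(G)\le\pw(G)$, so that $\pw(G)=\fw(G)$. The main obstacle is thus not the combinatorics of the bags but this realisability requirement: the insight that one global planar embedding discharges conditions (ii)--(iv) across all frames at once is exactly what makes the construction work, and it is precisely the step where planarity cannot be dropped.

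\noindent Finally, since computing $\pw$ is \NP-hard already for planar graphs of bounded degree, the identity $\pw(G)=\fw(G)$ established above transfers this hardness to the framewidth on the same graph class, which proves the corollary.
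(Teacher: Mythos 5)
Your proof is correct and follows the same overall strategy as the paper: establish $\pw(G)=\fw(G)$ for planar graphs (\cref{th:pathwidth}) and then transfer the known \NP-hardness of pathwidth on planar bounded-degree graphs to framewidth. The differences are internal to the proof of \cref{th:pathwidth}. For $\pw(G)\le\fw(G)$, the paper argues via the vertex separation number of the order $\tau$ (a contradiction argument plus the cited equivalence of vertex separation number and pathwidth), whereas you build a path decomposition directly from the frames---bag $B_i$ equals the visible set at step $i$---and verify the three axioms by hand; your version is more self-contained, the paper's is shorter given the citation. For $\fw(G)\le\pw(G)$ on planar $G$, the paper first normalizes to a nice path decomposition so that the introduction order is immediate and then declares the verification routine, whereas you work with an arbitrary decomposition, order vertices by first-appearance bag (ties broken arbitrarily), and prove explicitly the containment lemma that the visible set at step $i$ lies inside the bag introducing the $i$-th vertex---this fills in precisely the step the paper leaves to the reader, and your argument for it is sound. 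The crucial insight is identical in both treatments: a single fixed planar drawing $\Gamma$ of $G$, restricted to the visible vertices, realizes every frame simultaneously, so planarity of each frame and conditions (iii)--(iv) hold for free, and this is indeed the only place where planarity of $G$ is needed.
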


\noindent Analogously, computing the pathwidth of a graph is \FPT in the pathwidth~\cite{DBLP:journals/jal/BodlaenderK96}, hence computing the framewidth of a planar graph is also \FPT in the framewidth.

\bigskip\noindent\textbf{Complete bipartite graphs.} 
%
%
%
It is not difficult to verify that if a graph admits a \story, then it does not contain $K_5$ as a subgraph. 
However, complete bipartite graphs always admit a \story and such \story{s} have important properties. The next statement plays a central role in most of our proofs.

\begin{restatable}[$\star$]{lemma}{lebipartitethree}\label{le:bipartite-3}
Let $K_{a,b} = (A \cup B, E)$ be a complete bipartite graph with $a = |A|$,  $b=|B|$, and $3 \leq b \leq a$.
Let $\mathcal{S}=\langle \tau, \{D_i\}_{i\in [a+b]} \rangle$ be a \story of $K_{a,b}$. 
Exactly one of $A$ and $B$ is such that all its vertices are visible at some $i \in [a+b]$.
\end{restatable}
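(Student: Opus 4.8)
The plan is to observe that whether all vertices of a side are simultaneously visible at some step depends only on the total order $\tau$, and not on the actual drawings: by definition of the lifespan, a vertex $v$ is visible at step $i$ exactly when $\tau(v) \le i \le j_v$ with $j_v = \max_{u \in N[v]} \tau(u)$. Hence I would set the planar drawings $\{D_i\}$ aside and argue purely combinatorially about $\tau$. The natural starting point is the vertex $z$ that appears last, i.e.\ $\tau(z) = a+b$; since the claim is symmetric in $A$ and $B$, I would assume $z \in B$ (the convention $a \ge b$ is irrelevant for the argument). There are then two things to establish, \emph{existence} (some side is fully visible at a single step) and \emph{uniqueness} (not both), and I expect both to follow from a direct inspection of the lifespans relative to $z$.

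For existence, I would show that the side \emph{not} containing $z$, namely $A$, is fully visible at the final step $a+b$. Indeed, every $v \in A$ is adjacent to $z$ because the graph is complete bipartite, so $z \in N[v]$ and therefore $j_v = \max_{u \in N[v]} \tau(u) = a+b$; since also $\tau(v) \le a+b$, the step $a+b$ lies in the lifespan $[\tau(v), j_v]$ of every $v \in A$. Thus all of $A$ is visible at step $a+b$.

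For uniqueness, I would argue that the side containing $z$, namely $B$, is fully visible at no step. Any step at which all of $B$ is visible must be at least $\max_{u \in B}\tau(u) = \tau(z) = a+b$, hence exactly $a+b$. But pick any $u \in B \setminus \{z\}$, which exists since $b \ge 3$; its closed neighborhood is $\{u\} \cup A$, and since $z \notin A$ we get $j_u = \max\{\tau(u), \max_{w \in A}\tau(w)\} \le a+b-1$, so $u$ has already disappeared at step $a+b$. Hence $B$ is never fully visible, and combined with the previous paragraph this shows that exactly one side (here $A$) has the claimed property.

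The main point to get right is that the conclusion is genuinely order-theoretic: the planarity of the frames and the precise bound $b \ge 3$ play no role in \emph{this} statement (they only restrict which orders $\tau$ admit a \story at all, and guarantee that each side has at least two vertices so that the $z$-side argument goes through). The only real subtlety is to treat existence and uniqueness symmetrically in $A$ and $B$ and to handle the boundary behaviour at the last step $a+b$ carefully, since $a+b$ is simultaneously the unique candidate step for the $z$-side and the witnessing step for the opposite side.
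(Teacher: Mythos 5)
Your proof is correct for the lemma as stated, but it takes a genuinely different route from the paper's. The paper derives the lemma from two auxiliary results: \cref{le:bipartite-1} (existence: if one side is never entirely visible, two of its vertices have disjoint lifespans, which forces the whole other side to be visible at an intermediate step) and \cref{le:bipartite-2} (planarity: no frame can show three vertices of each side, as it would contain $K_{3,3}$), glued together by a maximal-interval argument. You instead observe that visibility depends only on $\tau$ and reason about the last vertex $z$: the side opposite to $z$ is entirely visible at step $a+b$, while every $u \neq z$ on $z$'s side has $z \notin N[u]$, hence $j_u < a+b$, so $z$'s side is never entirely visible. This is shorter, dispenses with planarity altogether, and even establishes the statement for every bijection $\tau$ as soon as $a,b\ge 2$ --- a strictly more elementary argument for the literal statement. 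What the paper's heavier approach buys is robustness to the way the lemma is later invoked: in the hardness proof (\cref{le:correctness-1}) it is applied to copies of $K_{3,3}$ that are \emph{subgraphs} of the constructed graph $G$, where a vertex's lifespan is governed by its neighborhood in $G$, not in the $K_{3,3}$. There your key inequality $j_u < \tau(z)$ can fail, because a vertex of $z$'s side may have neighbors outside the $K_{3,3}$ that appear after $z$; indeed, without planarity both sides of such a subgraph can be simultaneously visible (add an apex vertex adjacent to all six vertices and let it appear last). The paper's proof uses only two facts that survive in that setting --- a vertex cannot disappear before all its neighbors have appeared, and no planar frame contains $K_{3,3}$ --- so it transfers verbatim to the subgraph application, whereas with your argument the lemma would have to be restated and re-proved in that stronger ``induced-visibility'' form before it could be used in \cref{le:correctness-1}.
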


In view of \cref{le:bipartite-3}, we have the following definition.

\begin{definition}\label{def:fixflex}
For a complete bipartite graph $K_{a,b}$ with $3 \leq b \leq a$ and a \story $\mathcal{S}$ of $K_{a,b}$, we call \emph{fixed} the partite set of $K_{a,b}$ whose vertices are all visible at some step of $\mathcal{S}$, and \emph{flexible} the other partite set.
\end{definition}

\section{Complexity of \storyp}\label{se:storyp}

In this section we prove that: \storyp is \NP-complete and cannot be solved in $2^{o(n)}$ time unless \ethshort fails, but there is an algorithm running in $2^{O(n \log n)}$ time (\cref{sse:hardness}); \storyp is in \FPT parameterized by vertex cover number or feedback edge set number (\cref{sse:vc,sse:fes}); graphs of treewidth at most $3$ always admit a \story, which can be computed in linear time (\cref{sse:threetrees}). 

\subsection{Hardness}\label{sse:hardness}

\begin{figure}[t]
\centering
\subfigure[$K(x_i)$]{\label{fi:hard-a}\includegraphics[scale=1,page=2]{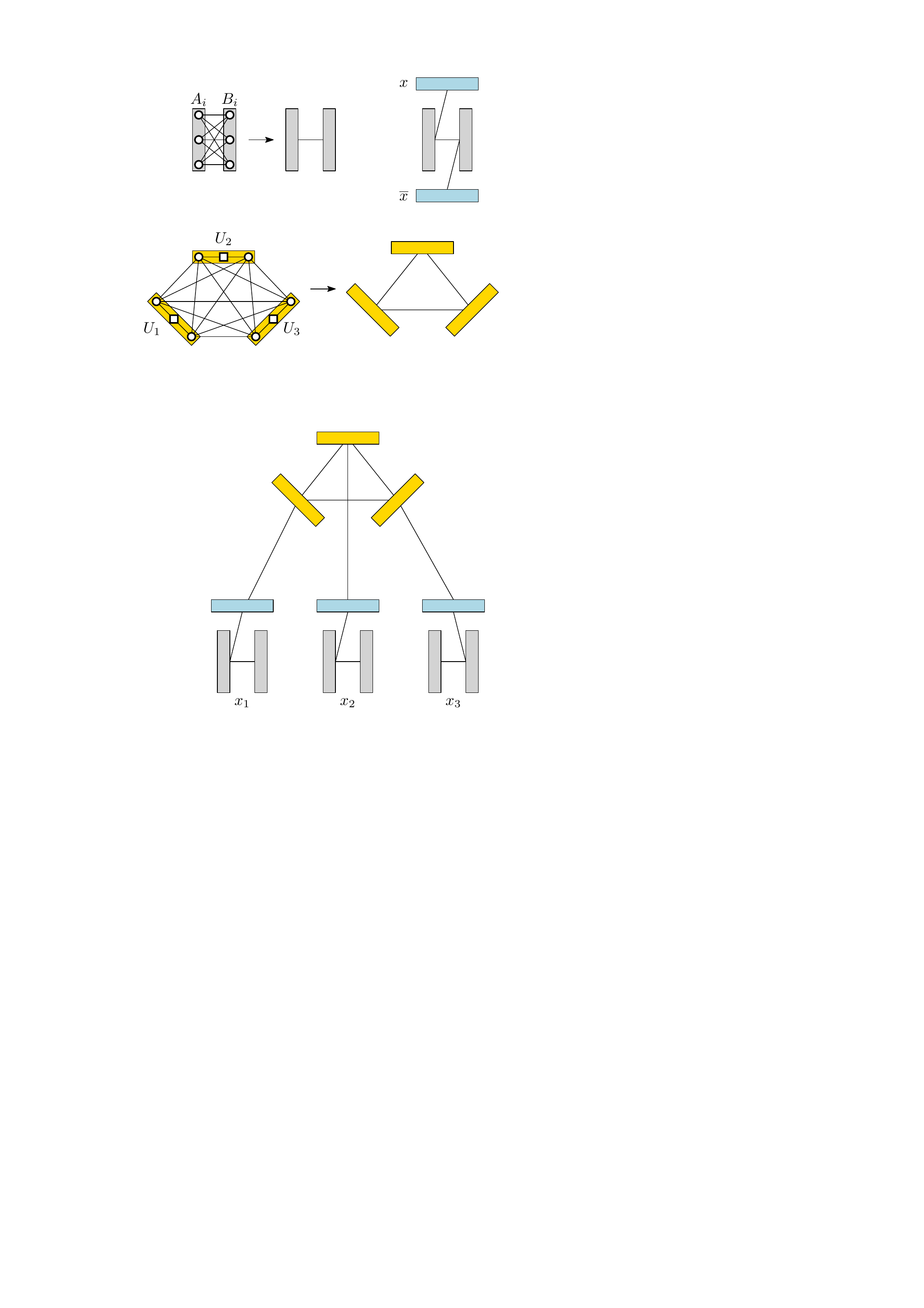}}\hfill
\subfigure[$K(C_i)$]{\label{fi:hard-b}\includegraphics[scale=1,page=3]{figs/hard}}\hfill
\subfigure[$W(l_{ij})$]{\label{fi:hard-c}\includegraphics[scale=1,page=4]{figs/hard}}
\caption{Illustration for the reduction of \cref{th:hardness}.\label{fi:hard}}
\end{figure}

We reduce from \onesat, a variant of 3SAT which asks whether there is a satisfying assignment in which \emph{exactly} one literal in each clause is true. Let $\varphi$ be a 3SAT formula over $N$ variables $\{x_i\}_{i \in [N]}$ and $M$ clauses   $\{C_i\}_{i \in [M]}$. We construct an instance of \storyp, i.e., a graph $G=(V,E)$, as follows; refer to \cref{fi:hard} for an illustration.

\smallskip\noindent\textbf{Variable gadget.} Each variable $x_i$ is represented in $G$ by a copy $K(x_i)$ of $K_{3,3}$ (see \cref{fi:hard-a}). Let $A_i$ and $B_i$ be the two partite sets of $K(x_i)$, which we call the \emph{v-sides} of $K(x_i)$. A true (false) assignment of $x_i$ will correspond to set $A_i$ being flexible (fixed) in a putative \story of $G$ (see \cref{def:fixflex}). 

\smallskip\noindent\textbf{Clause gadget.} Consider a copy of $K_{2,2,2}=(U_1 \cup  U_2 \cup U_3, F)$. An \emph{extended} $K_{2,2,2}$ is the graph obtained from any such a copy by adding three vertices $s_1,s_2,s_3$, such that these three vertices are pairwise adjacent, and each $s_j$ is adjacent to both vertices in $U_j$, for $j \in \{1,2,3\}$. In the following, $s_1,s_2,s_3$ are the \emph{special vertices} of the extended $K_{2,2,2}$, while the other vertices are the \emph{simple vertices}. A clause $C_i$ is represented in $G$ by an extended $K_{2,2,2}$, denoted by $K(C_i)$ (see \cref{fi:hard-b}). In particular, we call each of the three sets of vertices $U_j \cup \{s_j\}$ a \emph{c-side} of  $K(C_i)$. The idea is that $K(C_i)$ admits a \story if and only if exactly one c-side is flexible (each c-side will be part of a $K_{3,3}$, see the wire gadget below). 

\smallskip\noindent\textbf{Wire gadget.} Refer to \cref{fi:hard-c}. Let $x_i$ be a variable having a literal $l_{ij}$ in a clause $C_j$. Any such variable-clause incidence is represented in $G$ by a set  of three vertices, which we call the \emph{w-side} $W(l_{ij})$. All vertices of $W(l_{ij})$ are connected to all vertices of one of the three c-sides of $K(C_j)$, which we call $U$, such that the graph induced by $W(l_{ij}) \cup U$ in $G$ contains a copy of $K_{3,3}$. Also, each vertex of $W(l_{ij})$ is connected to all vertices of the v-side $A_i$ ($B_i$) if the literal is positive (negative), such that the graph induced by $W(l_{ij}) \cup A$ ($W(l_{ij}) \cup B$) in $G$ is a copy of $K_{3,3}$.  Also, note that each c-side of $K(C_j)$ is adjacent to exactly one w-side. 

\begin{restatable}[$\star$]{lemma}{lecorrectnessone}\label{le:correctness-1}
If graph $G$ admits a \story then $\varphi$ admits a satisfying assignment with exactly one true literal in each clause.
\end{restatable}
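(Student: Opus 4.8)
The plan is to show that a \story of $G$ induces a satisfying one-in-three assignment. The key is to read off the truth value of each variable $x_i$ from the fixed/flexible status of the v-sides $A_i, B_i$ of its gadget $K(x_i)$, and then verify that the constraints imposed by the clause and wire gadgets force exactly one true literal per clause.

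\begin{proof}
Assume $G$ admits a \story $\mathcal{S}=\langle \tau, \{D_i\}_{i\in[n]}\rangle$.
\textbf{Step 1 (Defining the assignment).} Since each $K(x_i)$ is a copy of $K_{3,3}$, by \cref{le:bipartite-3} exactly one of its v-sides $A_i, B_i$ is fixed and the other is flexible. Define the assignment by setting $x_i$ to \emph{true} if $A_i$ is flexible (equivalently $B_i$ is fixed), and to \emph{false} otherwise. This is well defined precisely because of the dichotomy guaranteed by \cref{le:bipartite-3}.

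\textbf{Step 2 (Propagating through the wires).} I would next analyze each w-side $W(l_{ij})$. By construction $W(l_{ij})$ together with a v-side of $K(x_i)$ forms a copy of $K_{3,3}$, and $W(l_{ij})$ together with a c-side $U$ of $K(C_j)$ forms another copy of $K_{3,3}$. Since $W(l_{ij})$ has exactly three vertices, applying \cref{le:bipartite-3} to each of these two $K_{3,3}$'s tells us whether the three vertices of $W(l_{ij})$ are simultaneously visible at some step. The goal is to establish the key local equivalence: the w-side $W(l_{ij})$ is fixed in the $K_{3,3}$ with its v-side if and only if it is flexible in the $K_{3,3}$ with its c-side (or some such complementary relationship), so that a positive literal being true forces the attached c-side of $K(C_j)$ into the flexible role, and a negative literal being true does the same. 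Here I would carefully trace the definitions: $W(l_{ij})$ attaches to $A_i$ if $l_{ij}$ is positive and to $B_i$ if negative, so ``literal $l_{ij}$ is true'' should translate, via the flexible/fixed status propagated through $W(l_{ij})$, into ``the corresponding c-side of $K(C_j)$ is flexible.''

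\textbf{Step 3 (The clause constraint).} Finally, I would invoke the structural property of the clause gadget asserted in the construction: the extended $K_{2,2,2}$ that represents $C_j$ admits a \story (equivalently, its frames can be planar throughout) \emph{if and only if} exactly one of its three c-sides is flexible. Combining this with Step 2, the three c-sides of $K(C_j)$ correspond to the three literals of $C_j$, and exactly one of them is flexible, which by Step 2 means exactly one literal of $C_j$ is true. Hence $\varphi$ has a one-in-three satisfying assignment.

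The main obstacle is Step 2 together with the ``only if'' direction of the clause property in Step 3: I must prove that the $K_{2,2,2}$ part of $K(C_j)$, augmented by the special vertices $s_1,s_2,s_3$ and their triangle, really forces exactly one flexible c-side. This requires a genuine topological/planarity argument about which configurations of simultaneously visible vertices can be drawn planarly across the frames, rather than a direct application of \cref{le:bipartite-3}; in particular one must show that having zero or more than one flexible c-side creates an unavoidable planarity violation (e.g.\ a forced $K_5$ or $K_{3,3}$ minor across consecutive frames, using conditions (ii)--(iv) in the definition of a \story). The special vertices and their mutual adjacencies are exactly what makes this counting rigid, so the bulk of the work is verifying their role.
\end{proof}
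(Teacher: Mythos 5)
Your Steps 1 and 2 follow the paper's proof: the assignment is read off from which v-side is flexible, and the propagation through the wires is a direct application of \cref{le:bipartite-3} (one clarification that removes your hedging ``or some such complementary relationship'': being \emph{fixed} is intrinsic to a triple of vertices --- it means all three are visible at a common step --- so if $W(l_{ij})$ is fixed in the $K_{3,3}$ it forms with a v-side, then by the ``exactly one'' clause of \cref{le:bipartite-3} applied to the $K_{3,3}$ it forms with its c-side, that c-side must be flexible, and vice versa).

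The genuine gap is Step 3, and you partly acknowledge it yourself. You ``invoke the structural property of the clause gadget asserted in the construction,'' but the construction only states this property as motivation (``The idea is that\dots''); it is never proved there, and proving it \emph{is} the bulk of the lemma. Your closing paragraph names the obstacle but does not overcome it, so the proposal is an outline with its central claim missing. For comparison, the paper's argument runs as follows. (a) \emph{Not all three c-sides can be fixed:} consider the frame $D_h$ in which the last simple vertex $v$ of $K(C_j)$ appears; every other simple vertex is still visible at step $h$ (each is adjacent to $v$ or shares its c-side), so $D_h$ contains a drawing of $K_{2,2,2}$, which is maximal planar and has a unique embedding up to the outer face, in which $v$ and the other vertex $v'$ of its c-side share no face. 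The special vertex $s$ adjacent to both $v$ and $v'$ therefore cannot be visible at step $h$ (it would give a planar $K_{2,2,2}$ plus a subdivided edge between $v$ and $v'$), cannot have disappeared (it is adjacent to $v$), so it has not yet appeared; since the special vertices are pairwise adjacent, none has appeared, so no vertex of the $K_{2,2,2}$ may disappear before some special vertex appears, forcing a non-planar frame --- contradiction. (b) \emph{At most one c-side is flexible:} if a c-side is flexible, no frame contains both its simple vertices (otherwise, since neither may disappear before their common special vertex appears, all three would be co-visible); but when the last simple vertex of $K(C_j)$ appears, its four neighbors in the $K_{2,2,2}$ are visible, so the other two c-sides each have both simple vertices co-visible and are therefore fixed. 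Together (a) and (b) give exactly one flexible c-side. Note that this is lifespan-and-embedding reasoning about the special vertices, not the ``forced $K_5$ or $K_{3,3}$ minor across consecutive frames'' mechanism you sketch; without an argument of this kind, Step 3 remains unproven and the reduction's correctness does not follow.
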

\begin{proof}[Sketch]
Let $\mathcal{S}$ be a \story of $G$. For each variable gadget $K(x_i)$ we assign the  value \emph{true} to $x_i$ if the v-side $A_i$  is flexible in $\mathcal{S}$. Consider any literal $l_{ij}$ and the wire gadget $W_{ij}$. If $l_{ij}$ is  positive (negative), then $A_i$  ($B_i$) and $W_{ij}$ form a $K_{3,3}$, hence by \cref{le:bipartite-3} the w-side $W_{ij}$ is fixed (flexible). Analogously, if we consider the clause gadget $K(C_j)$, the c-side connected with $W_{ij}$ is flexible (fixed). Symmetrically, we assign the value \emph{false} to $x_i$ if the v-side $B_i$ is instead flexible in $\mathcal{S}$, and for any positive (negative) literal $l_{ij}$, the w-side $W_{ij}$ is flexible (fixed), while the corresponding c-side of $K(C_j)$ is fixed (flexible).  In other words, the value of $x_i$ propagates consistently throughout all its literals. It remains to prove that, for any clause $C_j$ of $\varphi$, precisely one literal is true. Namely, we claim that exactly one c-side of $K(C_j)$ is flexible, while the other two are fixed. At high level, we rely on the fact that an extended $K_{2,2,2}$ wants at least two c-sides to be fixed, while the special vertices force at least one c-side to be flexible.
\end{proof}

\begin{restatable}[$\star$]{lemma}{lecorrectnesstwo}\label{le:correctness-2}
If the formula $\varphi$ admits a satisfying assignment with exactly one true literal in each clause, then graph $G$ admits a \story.
\end{restatable}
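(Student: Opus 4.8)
The plan is to reverse the correspondence used in the proof of \cref{le:correctness-1}: given a one-in-three satisfying assignment, I first fix the \emph{fixed}/\emph{flexible} role of every partite set of every $K_{3,3}$ (and of every c-side), and then build an explicit \story that realizes exactly these roles. Concretely, for a true (false) variable $x_i$ I declare $A_i$ flexible and $B_i$ fixed (resp.\ $A_i$ fixed and $B_i$ flexible); each w-side $W(l_{ij})$ is declared fixed iff the literal $l_{ij}$ is satisfied; and the c-side of $K(C_j)$ incident to $W(l_{ij})$ is declared flexible iff $l_{ij}$ is satisfied. The first thing to check is that this assignment of roles is \emph{globally consistent}: every vertex receives a single role across all the $K_{3,3}$'s it belongs to (for instance $A_i$ is flexible in its variable gadget and in all its positive-literal wires simultaneously), and, since a clause has exactly one satisfied literal, exactly one c-side of each $K(C_j)$ is flexible while the other two are fixed --- precisely the configuration the clause gadget can host.

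Second, I would build the vertex order $\tau$ in two phases. Phase~1 lists all \emph{fixed} vertices (in any order); phase~2 lists the \emph{flexible} vertices. The guiding principle, extracted from the single-$K_{3,3}$ analysis behind \cref{le:bipartite-3}, is \emph{``fixed before flexible''}: every interface edge of $G$ runs between a fixed and a flexible vertex, so putting all fixed vertices first is consistent with all gadgets at once, and it guarantees that each flexible side becomes visible only after its entire fixed neighborhood. The crucial structural observation is that the flexible sides are pairwise \emph{independent modules}: every neighbor of a flexible vertex is either fixed or lies in the same side (there is no edge between two flexible vertices of different sides). Hence in phase~2 I can process the flexible sides one contiguous block at a time. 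Within a block I order the three vertices so that the side never becomes entirely visible: a v-side is harmless (its three vertices are mutually non-adjacent and each is visible for a single step), while for a flexible c-side $U_k\cup\{s_k\}=\{u_k,u_k',s_k\}$ I list $u_k,s_k,u_k'$, so that $u_k$ disappears before $u_k'$ appears and at most two of the three coexist.

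Third comes the drawing. I would place the fixed part of each gadget in its own disjoint region of the plane. The \emph{fixed subgraph} is the disjoint union of one small planar piece per clause (the two fixed c-sides with their internal edges) together with isolated fixed vertices (every fixed $B_i$/$A_i$ and every fixed w-side has only flexible neighbors, hence no fixed incident edge), so it is planar; drawing each piece with its relevant vertices incident to the outer face makes all fixed vertices reachable. Every frame of phase~1 is a subgraph of this fixed drawing, hence planar. In phase~2, a frame consists of the fixed drawing plus the at most two currently visible vertices of a \emph{single} flexible side together with their incident edges; since flexible sides do not interact and flexible vertices are visible for at most a few consecutive steps, the point of a vertex and the curve of an edge can be chosen independently for each block without ever conflicting with another block. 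For a v-side vertex the added edges form a star into isolated fixed vertices, which is trivially planar; the substantive case is the c-side, where I must verify that the fixed piece of $K(C_j)$ augmented by one or two vertices of the flexible c-side (and their edges to $U_a,U_b,s_a,s_b$ and to the fixed w-side) is planar.

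The main obstacle is exactly this last verification, i.e.\ showing that the extended $K_{2,2,2}$ with a prescribed flexible c-side admits the required frames. The point is that the gadget is \emph{not} drawn all at once: the ordering $u_k,s_k,u_k'$ ensures $u_k$ and $u_k'$ are never simultaneously visible, so the potentially obstructing $K_{2,3}$'s on the non-adjacent pair $U_k=\{u_k,u_k'\}$ never materialize in a single frame. I expect the proof to discharge this by an explicit constant-size planar drawing of the gadget frames (a finite case check), after which the global \story follows by assembling the per-gadget drawings in their disjoint regions.
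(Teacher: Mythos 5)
There is a fatal gap in your Phase~1. Your plan rests on the claim that the fixed vertices can all appear first and coexist in a planar frame, the per-clause ``fixed piece'' (the two fixed c-sides with their internal edges) being planar. It is not: in an extended $K_{2,2,2}$ with c-sides $U_j\cup\{s_j\}$, the subgraph induced by two c-sides, say $\{u_1,u_1',s_1\}\cup\{u_2,u_2',s_2\}$, is exactly $K_{3,3}$ with bipartition $\{u_1,u_1',s_2\}$ versus $\{u_2,u_2',s_1\}$ (each $u_1^{(\prime)}$ is adjacent to $u_2,u_2'$ by the tripartite edges and to $s_1$; and $s_2$ is adjacent to $u_2,u_2'$ and to $s_1$). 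Moreover, during your Phase~1 none of these six vertices can disappear, since each has a flexible neighbor (the flexible c-side's simple vertices, the special vertex $s_3$, or a flexible w-side) that appears only in Phase~2. So the frame at the end of Phase~1 contains a $K_{3,3}$ and is not planar. This is not a repairable detail but the crux of the gadget: the reason \cref{le:correctness-1} works is precisely that the two fixed c-sides of a clause can \emph{never} be entirely visible at the same step. Your guiding principle ``fixed before flexible'' misreads what \emph{fixed} means in \cref{def:fixflex}: it only requires that each fixed side be fully visible at \emph{some} step, not that fixed vertices appear early and persist; indeed your premise that every edge of $G$ joins a fixed and a flexible vertex is false exactly for the fixed--fixed edges inside clause gadgets.

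The paper's proof avoids this by processing clause gadgets one at a time with interleaved appearances \emph{and disappearances}: the three special vertices appear first; then the two simple vertices of one false literal appear (making that c-side fully visible), after which its special vertex $s_1$ disappears \emph{before} the second false literal's simple vertices appear; only then is the second fixed c-side fully visible, and by that time $s_1$ is gone, so the $K_{3,3}$ above never materializes in a single frame. The flexible c-side is then handled essentially as you propose (its two simple vertices never coexist), and the wire/variable connections are drawn by nesting the previously drawn fixed w-sides and v-sides inside a suitable face of the current clause drawing --- another point your ``disjoint regions'' picture would need, since flexible c-side vertices and flexible w-side vertices have neighbors in two different gadgets. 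Your treatment of variable and wire gadgets (fixed v-sides and fixed w-sides first, flexible v-sides as transient stars) does match the paper; the clause gadget is where the proposal breaks.
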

\begin{proof}[Sketch]
Given a satisfying assignment of $\varphi$ with one true literal per clause, we can compute a \story $\mathcal{S}=\langle \tau, \{D_i\}_{i \in [n]} \rangle$ of $G$. In what follows, when the order of a group of vertices is not specified, any relative order is valid.

\begin{figure}
\centering
\includegraphics[scale=1,page=11]{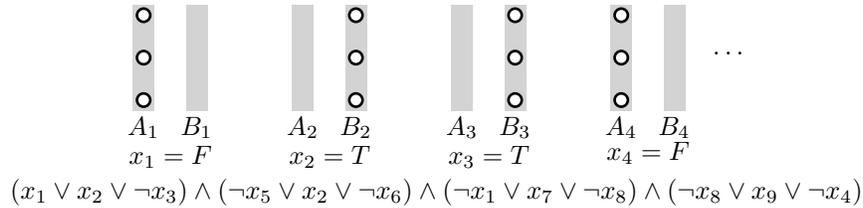}
\caption{Proof of \cref{le:correctness-2}: drawing the vertices of the fixed v-sides.\label{fi:draw-variable}}
\end{figure}

Consider a single variable gadget $K(x_i)$. If $x_i$ is true in the satisfying assignment, then we let appear the three vertices of the v-side $B_i$ of $K(x_i)$, that is, $B_i$ is the fixed side of $K(x_i)$. If $x_i$ is false, we do the opposite, namely we let appear the three vertices of the v-side $A_i$ of $K(x_i)$. This procedure is repeated for all variables in any order.  For ease of presentation, we can imagine that all the drawn v-sides are horizontally aligned, as shown in \cref{fi:draw-variable}. Thus, for the variable gadgets, it remains to draw their flexible v-sides.

\begin{figure}
\centering
\includegraphics[scale=1,page=12]{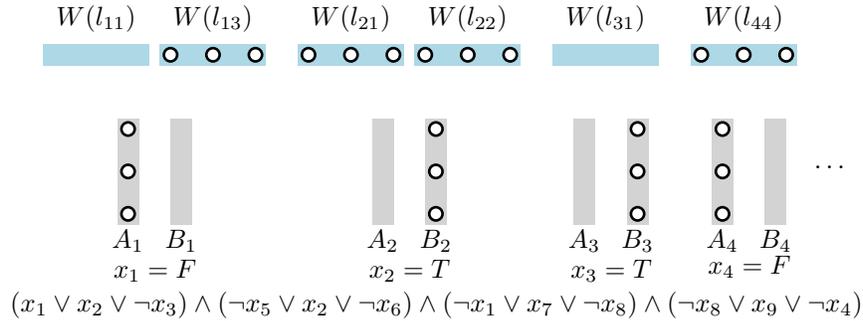}
\caption{Proof of \cref{le:correctness-2}: drawing the vertices of the fixed w-sides.\label{fi:draw-wires}}
\end{figure}

Consider now a wire gadget $W(l_{ij})$. If $x_i$ is true and $l_{ij}$ is positive, then $W(l_{ij})$ must be fixed because it forms a $K_{3,3}$ with the v-side  $A_i$  of $K(x_i)$, which is flexible. Therefore we let appear the three vertices of $W(l_{ij})$.  Similarly, if $x_i$ is false and $l_{ij}$ is negative,  then $W(l_{ij})$ must be fixed, and we let appear the three vertices of $W(l_{ij})$. Again, this procedure is repeated for all wires in any order. For ease of presentation, we can imagine that all the drawn w-sides are arranged along a horizontal line slightly above the variable gadgets, as shown in \cref{fi:draw-wires}. Thus, also for the wire gadgets, it remains to draw the flexible w-sides.

\begin{figure}
\centering
\includegraphics[scale=1,page=13]{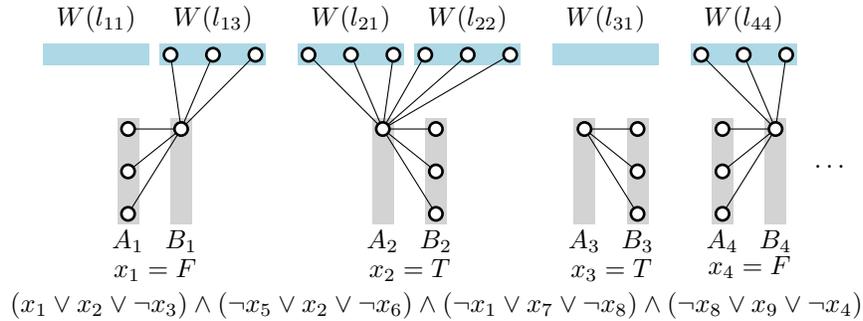}
\caption{Proof of \cref{le:correctness-2}: drawing the vertices of the flexible v-sides.\label{fi:draw-flex-variable}}
\end{figure}

\begin{figure}
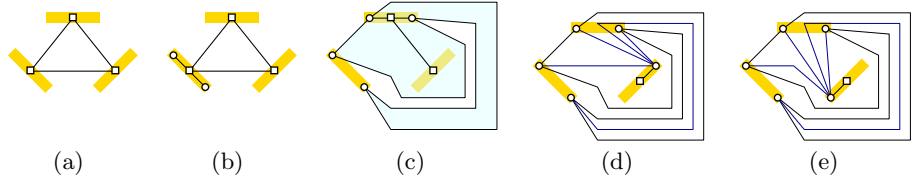

\centering
\subfigure[]{\includegraphics[scale=0.5,page=6]{figs/hard}\label{fi:draw-clause-a}}\hfill
\subfigure[]{\includegraphics[scale=0.5,page=7]{figs/hard}\label{fi:draw-clause-b}}\hfill
\subfigure[]{\includegraphics[scale=0.5,page=8]{figs/hard}\label{fi:draw-clause-c}}\hfill
\subfigure[]{\includegraphics[scale=0.5,page=9]{figs/hard}\label{fi:draw-clause-d}}\hfill
\subfigure[]{\includegraphics[scale=0.5,page=10]{figs/hard}\label{fi:draw-clause-e}}\hfill
\caption{Proof of \cref{le:correctness-2}: drawing a clause gadget.\label{fi:draw-clause}}
\end{figure}

We sketch the remaining part of the proof (see \cref{ap:hardness}). Flexible v-sides can be drawn as in \cref{fi:draw-flex-variable}. \cref{fi:draw-clause} shows how to draw a clause gadget, ignoring the connections with the linked wire gadgets. Finally, in order to draw the flexible w-sides and their edges, and the edges between the fixed w-sides and the corresponding c-sides, we enclose all the wire and variable gadgets in a face of the current clause gadget where all vertices of the linked c-side are visible.
\end{proof}


\begin{restatable}[$\star$]{theorem}{thhardness}\label{th:hardness}
The \storyp problem is \NP-hard and it has no $2^{o(n)}$ time algorithm unless \ethshort fails.
\end{restatable}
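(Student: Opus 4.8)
The plan is to establish \NP-hardness by combining the two correctness lemmas (\cref{le:correctness-1} and \cref{le:correctness-2}) into an equivalence, and then to derive the \ethshort lower bound from the size of the constructed instance. First I would observe that \storyp is in \NP, since a \story is a polynomially sized certificate: the total order $\tau$ together with the frames $\{D_i\}_{i \in [n]}$ can be verified in polynomial time by checking the lifespan condition, the planarity of each frame, and the consistency conditions (iii)--(iv). (Strictly speaking, one must argue that a \story, if one exists, admits a representation of polynomial size; this follows because each frame is a planar drawing and combinatorial planar embeddings are polynomially encodable, so it suffices to store $\tau$ and the embedding of each frame.)

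Next I would invoke the reduction already set up in this subsection. Given a 3SAT formula $\varphi$ with $N$ variables and $M$ clauses, the graph $G$ is built from one $K_{3,3}$ per variable, one extended $K_{2,2,2}$ per clause, and one three-vertex w-side per variable-clause incidence. \cref{le:correctness-1} gives that a \story of $G$ yields a satisfying assignment of $\varphi$ with exactly one true literal per clause, and \cref{le:correctness-2} gives the converse. Together these show that $G$ admits a \story if and only if $\varphi$ is a yes-instance of \onesat. Since \onesat is \NP-complete and $G$ is constructed in polynomial time, \NP-hardness of \storyp follows immediately.

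For the \ethshort lower bound, the key point is to bound the number of vertices $n$ of $G$ linearly in the size of $\varphi$. Each variable gadget contributes $6$ vertices, each clause gadget contributes $6+3=9$ vertices, and each of the (at most $3M$) variable-clause incidences contributes a w-side of $3$ vertices; hence $n = 6N + 9M + 3 \cdot (\text{number of incidences}) = O(N + M)$. Since \onesat has no $2^{o(N+M)}$ time algorithm unless \ethshort fails (this follows from the standard sparsification/linear-blowup arguments for 3SAT variants), and since $n = O(N+M)$, any algorithm solving \storyp in $2^{o(n)}$ time would solve \onesat in $2^{o(N+M)}$ time, contradicting \ethshort.

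The main obstacle I anticipate is not in the lemma-chaining itself, which is routine once \cref{le:correctness-1} and \cref{le:correctness-2} are in hand, but in the membership argument and the precise instance-size bookkeeping. For membership one must be careful that the existence of a \story implies the existence of one whose frames admit a polynomial-size combinatorial description, so that \NP-membership is genuinely justified rather than assumed. For the \ethshort bound, the delicate step is ensuring the reduction is genuinely linear: I would verify that the number of variable-clause incidences is $O(M)$ (each clause has three literals), so that the total vertex count stays $O(N+M)$ and the sub-exponential lower bound transfers cleanly.
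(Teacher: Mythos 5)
Your proposal is correct and follows essentially the same route as the paper: chain \cref{le:correctness-1} and \cref{le:correctness-2} into an equivalence with \onesat, note the reduction is polynomial-time for \NP-hardness, and transfer the \ethshort bound through the linear size of the constructed instance. The only differences are presentational: the paper makes explicit the chain you compress into ``standard sparsification/linear-blowup arguments'' (namely, \ethshort plus the Sparsification Lemma rules out $2^{o(N+M)}$ algorithms for \sat, and Schaefer's reduction from \sat to \onesat has only linear blowup), and your \NP-membership argument, while sound, is not needed for this statement since the paper establishes membership separately in \cref{th:upper}.
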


The above lower bound for the running time of an algorithm solving \storyp can be easily complemented with a nearly tight upper bound. The proof of the next theorem also shows that \storyp belongs to \NP. Namely, it gives a nondeterministic scheme to generate a set of candidate solutions, and then it shows how to check, in polynomial time, if a candidate solution is valid. However, it intertwines such process in order to obtain a lower time complexity. 

\begin{restatable}[$\star$]{theorem}{thupper}\label{th:upper}
The \storyp problem is in \NP. Also, given an $n$-vertex graph $G$, there is an algorithm that solves \storyp on $G$ in $2^{O(n \log n)}$ time.
\end{restatable}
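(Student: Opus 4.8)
The plan is to prove both the membership in \NP and the $2^{O(n\log n)}$ upper bound by describing a single procedure that nondeterministically guesses the combinatorial structure of a \story and then verifies it in polynomial time. The key observation is that a \story is determined by two pieces of data: the total order $\tau$ of the vertices, and, for each vertex $v$, the combinatorial choices needed to place $v$ planarly at the step $i_v$ when it appears. Since at each step only one new vertex is introduced into a planar drawing, the relevant information to fix is where the new vertex goes relative to the existing planar embedding (i.e., in which face, and how its incident edges to already-visible vertices are routed). The first step is therefore to argue that we may restrict attention to \emph{planar embeddings} rather than actual geometric drawings: by a standard consequence of the fact that planar drawings are equivalent up to homeomorphism to planar embeddings, it suffices to maintain, at each step, a combinatorial embedding (rotation system plus outer face) of the subgraph induced by the currently visible vertices, with the constraint that the embedding of the edges and vertices surviving into the next step does not change.

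The second step is to bound the number of candidate solutions. There are $n!$ choices for $\tau$, which already contributes a factor of $2^{O(n\log n)}$. For the embeddings, the point to make is that once $\tau$ is fixed and once we fix, for each newly appearing vertex, its combinatorial insertion data, the entire sequence of frames is determined. Since each frame has at most $n$ vertices and is planar, it has $O(n)$ edges and $O(n)$ faces; inserting a new vertex with its back-edges into a planar embedding requires specifying one face and a cyclic interleaving of its edges, which can be encoded using $O(\log n)$ bits per inserted edge, hence $2^{O(n\log n)}$ total over all $n$ insertions. Multiplying the two sources of nondeterminism keeps the count at $2^{O(n\log n)}$, and each candidate can be verified in polynomial time by checking conditions (i)--(iv): that every visible vertex is present in each frame, that each frame is planar (via any linear-time planarity test on the induced subgraph together with the fixed embedding), that vertex positions persist, and that edge curves persist across each vertex's lifespan. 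This verification establishes membership in \NP as well, since a candidate solution has polynomial size.

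The refinement that yields the stated $2^{O(n\log n)}$ bound rather than a naive $2^{O(n^2)}$ is to \emph{intertwine} the guessing and the checking: instead of guessing all embeddings up front and then testing, one processes the steps $i=1,\dots,n$ sequentially, maintaining the planar embedding of the current frame, and at step $i$ guesses only the insertion data for the single new vertex $\tau^{-1}(i)$ together with checking that its incident edges to visible neighbors can be added while preserving planarity and while not disturbing the fixed sub-embedding inherited from step $i-1$. Because each individual insertion into an $n$-vertex planar embedding admits only $2^{O(\log n)} = \operatorname{poly}(n)$ essentially-distinct choices that need to be tried, and there are $n$ insertions, the total branching across the whole sequence is $\operatorname{poly}(n)^n = 2^{O(n\log n)}$; combined with the $n! = 2^{O(n\log n)}$ orders, the overall running time remains $2^{O(n\log n)}$.

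I expect the main obstacle to be the precise accounting of the combinatorial insertion data and the argument that a polynomial number of insertion choices per step suffices, together with the verification that fixing a sub-embedding across consecutive steps (so that surviving vertices and edges keep their curves, condition (iii)--(iv)) is compatible with the standard planarity machinery. In particular, one must be careful that the constraint ``the sub-embedding of surviving vertices and edges is unchanged'' is correctly propagated: an edge drawn at some early step must remain routed identically for its entire lifespan, which couples nonadjacent steps and must be encoded as part of the guessed data rather than re-decided at each frame. Making this bookkeeping rigorous while keeping the per-step branching polynomial is the delicate part; the factorial from the vertex order and the final multiplication are routine.
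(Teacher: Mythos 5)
Your overall strategy---replacing drawings by planar embeddings, guessing the vertex order, and enforcing consistency only between consecutive frames (which suffices because every edge is present in a contiguous interval of frames)---is the same as the paper's, and your \NP membership argument is sound. However, the running-time analysis has a genuine gap, concentrated in two claims. First, the claim that inserting a single vertex into an $n$-vertex planar embedding admits only $2^{O(\log n)}=\mathrm{poly}(n)$ essentially-distinct choices is false. Consider a \story of $K_{n/2,n/2}$ (which exists, since complete bipartite graphs always admit one): the fixed side can appear first as $n/2$ isolated visible vertices, after which each flexible vertex appears adjacent to all of them and disappears immediately. Inserting such a vertex creates a star whose rotation at the center can be any of the $(n/2-1)!$ cyclic orders of its edges, and these are pairwise distinct embedding states that your nondeterministic process must branch over; so a single step can branch $2^{\Theta(n\log n)}$ ways, not $\mathrm{poly}(n)$ ways. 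Second, your encoding count (``$O(\log n)$ bits per inserted edge, hence $2^{O(n\log n)}$ total'') silently assumes that the total number of inserted edges is $O(n)$; in fact it is $m$, and graphs admitting \story{s} need not be sparse or planar---only the frames are planar---so $m$ can be $\Theta(n^2)$ (again $K_{n/2,n/2}$). Either way, your accounting, which multiplies per-step branching factors over the sequence of steps, yields only $2^{O(n^2\log n)}$, not the claimed $2^{O(n\log n)}$.

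The paper obtains the stated bound with a different bookkeeping that you would need to adopt: it never multiplies branching factors across steps. For a fixed order, it bounds the number of distinct planar embeddings of any single frame---rotation system, choice of outer faces, and the ``position system'' recording the nesting of connected components (a detail your proposal also omits, but which is needed since frames may be disconnected)---by $2^{O(n\log n)}$; it generates this set of candidate states for every step $i$, and then filters layer by layer: an embedding $\mathcal{E}_i$ is kept precisely when some kept embedding $\mathcal{E}_{i-1}$ agrees with it on $G_{i-1}\cap G_i$. This is a reachability computation over layers of merged states, whose cost per step is at most (number of states at step $i-1$) times (number of states at step $i$), i.e.\ $2^{O(n\log n)}$, so the total over all $n$ steps and all $n!$ orders remains $2^{O(n\log n)}$. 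Your sequential guess-and-insert formulation is fine for establishing membership in \NP, but its deterministic simulation must merge equal states per layer rather than enumerate guess sequences; as written, your proof of the $2^{O(n\log n)}$ upper bound does not go through.
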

\begin{proof}[Sketch]
We first guess a total order of the vertices of $G$. This fixes for each $i \in [n]$ the visible vertices. Next, for each $i \in [n]$, we generate the possible planar embedddings (rather than planar drawings) of the graph induced by the vertices visible at step $i$, and discard any embedding $\mathcal E$ for which there is no planar embedding $\mathcal E'$ generated at step $i-1$ (if $i>1$) such that the restrictions of  $\mathcal E$ and  $\mathcal E'$ to the common subgraph coincide. If the algorithm returns at least one planar embedding at step $n$, there is a sequence of planar embeddings in which common subgraphs share the same embedding, hence $G$ admits a \story.
\end{proof}

\subsection{Parameterization by Vertex Cover Number}\label{sse:vc}

A \emph{vertex cover} of a graph $G=(V,E)$ is a set $C \subseteq V$ such that every edge of $E$ is incident to a vertex in $C$, and the \emph{vertex cover number} of $G$ is the minimum size of a vertex cover of $G$. We prove the following by means of kernelization.

\begin{restatable}[$\star$]{theorem}{thfptvc}\label{th:fpt-vc}
Let $G=(V,E)$ be a graph with $n$ vertices and vertex cover number $\vc=\vc(G)$.
Deciding whether $G$ admits a \story, and computing one if any, can be done in $O(2^{2^{O(\vc)}}+n^2)$ time.
\end{restatable}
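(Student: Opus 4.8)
The plan is to give a kernelization controlled by the vertex cover number: starting from a vertex cover $C$ of $G$ with $|C| = O(\vc)$, I would reduce $G$ to an equivalent instance $G'$ whose size depends only on $\vc$, and then solve $G'$ with the enumeration algorithm of \cref{th:upper}, which runs in $2^{O(k \log k)}$ time on a $k$-vertex graph. First I would compute a vertex cover $C$ (exactly in FPT time, or via the standard $2$-approximation), so that $|C| = O(\vc)$; the independent set $I = V \setminus C$ is then partitioned into \emph{types}, where two vertices share a type if and only if they have the same neighborhood inside $C$. Since each such neighborhood is a subset of $C$, there are at most $2^{|C|} = 2^{O(\vc)}$ types, and vertices of the same type are pairwise non-adjacent false twins.

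The reduction rule keeps, for each type $S \subseteq C$, a single representative vertex and deletes every other vertex of type $S$; call the resulting graph $G'$, so that $|V(G')| \le |C| + 2^{|C|} = 2^{O(\vc)}$. Correctness amounts to proving that $G$ admits a \story if and only if $G'$ does. The forward direction is immediate: $G'$ is an induced subgraph of $G$, and the restriction of a \story to any induced subgraph (restricting $\tau$ and taking subdrawings) is again a \story, since planarity and the consistency of points and curves are inherited while lifespans can only shrink.

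The core is the backward direction: given a \story $\mathcal{S}' = \langle \tau', \{D'_i\} \rangle$ of $G'$, I would reinsert the deleted twins. Fix a type $S$ with representative $w_S$, appearing at step $p = \tau'(w_S)$; since all of $S$ must be present when $w_S$ appears, the vertices of $S$ appear strictly before step $p$. As $w_S$ is drawn in $D'_p$ joined to all of $S$, deleting $w_S$ shows that the face $F_S$ of $D'_{p-1}$ that contained $w_S$ is incident to every vertex of $S$. Crucially, because $w_S$ is the \emph{only} twin of type $S$ in $G'$, the frame $D'_{p-1}$ contains no twin of type $S$, so $F_S$ is a clean face incident to all of $S$. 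I would then insert each deleted twin $v$ of type $S$ at its own new step placed in the gap between steps $p-1$ and $p$, drawing $v$ inside $F_S$ joined to all of $S$; since the whole neighborhood $S$ has already appeared, $v$ has a single-step lifespan and disappears immediately. Each such frame is $D'_{p-1}$ plus one star inside one of its faces, hence planar, and no two inserted twins are ever simultaneously visible. One then verifies that the insertions for distinct types use disjoint new steps, leave every persistent frame unchanged, and produce a valid bijection $\tau \colon V \to [n]$, so $G$ admits a \story.

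The delicate point, and the main obstacle, is exactly this planarity when $|S| \ge 3$: if two distinct twins of type $S$ were simultaneously visible and each joined to all of $S$, then together with three vertices of $S$ they would form a visible $K_{3,3}$, so no single frame could contain them planarly (this is the same obstruction underlying \cref{le:bipartite-3}). Keeping a \emph{single} representative per type is precisely what circumvents this, as it guarantees a witness face $F_S$ free of type-$S$ twins through which the deleted twins can be paraded one at a time. Putting everything together, computing $C$, the types, and performing the reinsertion cost $O(n^2)$, while running \cref{th:upper} on $G'$ takes $2^{O(|V(G')| \log |V(G')|)} = 2^{2^{O(\vc)}}$ time, giving the claimed $O(2^{2^{O(\vc)}} + n^2)$ bound.
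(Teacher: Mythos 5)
Your overall strategy (classify the independent set $V\setminus C$ into types by neighborhood in $C$, kernelize, then run the embedding-enumeration algorithm of \cref{th:upper} on the kernel) is the same as the paper's, but your reduction rule is unsafe: keeping a \emph{single} representative per type can turn a NO instance into a YES instance. Concretely, take $G=K_{3,3,3}$ with parts $A,B,W$. This graph admits no \story: whichever vertex appears last in $\tau$, say it lies in part $W$, every vertex of $A\cup B$ is adjacent to it and hence is still visible at that step, so some frame contains a drawing of $K_{3,3}$, contradicting planarity (this is the same obstruction as in \cref{le:bipartite-3}). A minimum vertex cover is $C=A\cup B$, and all three vertices of $W$ have the same type $S=C$; your rule deletes two of them, producing $G'=K_{3,3,1}$, which \emph{does} admit a \story: let $a_1,a_2,a_3$ appear, then $w$, then $b_1,b_2,b_3$ one at a time --- each $b_i$ has a one-step lifespan, so no frame ever contains more than one vertex of $B$, and every frame is planar. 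This is exactly why the paper's rules \textbf{R.1}--\textbf{R.3} keep up to \emph{three} vertices per type when $|U|\ge 3$: three kept twins preserve the $K_{3,3}$ subgraph and hence the fixed/flexible constraint of \cref{le:bipartite-3}, which a single representative destroys.

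The error is also visible inside your backward argument. First, the claim that ``all of $S$ must be present when $w_S$ appears'' is false: in a \story a vertex may appear \emph{before} its neighbors; it merely cannot disappear until they have all appeared. So there need not exist any frame, let alone a face $F_S$ of $D'_{p-1}$, in which all of $S$ is simultaneously visible --- precisely what happens in the $K_{3,3,1}$ \story above, where $B$ is never fully visible, making it impossible to insert the two deleted twins of $W$... which is consistent with the fact that they cannot be inserted. Second, your stated obstruction is miscounted: two simultaneously visible twins joined to three vertices of $S$ form $K_{2,3}$, which is planar; you need \emph{three} visible twins to create $K_{3,3}$. The paper's safety proof (\cref{le:kernel}, Case C) exploits exactly this: with three kept twins it argues via \cref{le:bipartite-3} that either $U$ is fixed, in which case two kept twins have disjoint lifespans and the removed vertex can reuse the drawing slot of one of them, or $V_U$ is fixed, in which case at most two vertices of $U$ are visible at any step of that interval and the removed vertex can shadow a kept twin. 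Neither sub-argument is available with one representative, and, as the counterexample shows, no argument can be: the rule itself is not safe.
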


\medskip\noindent\textbf{Algorithm Description.} 
Without loss of generality, we assume that the input graph $G$ does not contain isolated vertices, as they do not affect the existence of a \story. Let $C$ be a vertex cover of size $\vc=\vc(G)$ of graph $G$. For $U \subseteq C$, a vertex  $v \in V \setminus C$ is of \emph{type $U$} if $N(v) = U$, where $N(v)$ denotes the set of neighbors of $v$ in $G$. This defines an equivalence relation on $V \setminus C$ and in particular partitions $V \setminus C$ into at most $\sum_{i=1}^{\vc} {\vc \choose{i}}=2^{\vc}-1 < 2^\vc$ distinct types. Denote by $V_U$ the set of vertices of type $U$. We define three reduction rules.

\smallskip\noindent\textbf{R.1:} {\em If there exists a type $U$ such that $|U| = 1$, then pick an arbitrary vertex $x \in V_U$ and remove it from $G$.}

\smallskip\noindent\textbf{R.2:} {\em If there exists a type $U$ such that $|U| = 2$ and $|V_U| > 1$, then pick an arbitrary vertex $x \in V_U$ and remove it from $G$.}

\smallskip\noindent\textbf{R.3:} {\em If there exists a type $U$ such that $|U|  \ge 3$ and $|V_U| > 3$, then pick an arbitrary vertex $x \in V_U$ and remove it from $G$.}

\begin{restatable}[$\star$]{lemma}{lekernel}\label{le:kernel}
Let $G'$ be the graph obtained from $G$ by applying one of the reduction rules \textbf{R.1--R.3}. 
Then $G$ admits a \story if and only if $G'$ does. 
\end{restatable}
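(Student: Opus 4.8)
The plan is to prove the two directions separately: the forward implication (if $G$ admits a \story then so does $G'$) uniformly for all three rules, and the backward implication (if $G'$ admits a \story then so does $G$) rule by rule. Since each rule deletes a single vertex $x$, in all cases $G' = G[V\setminus\{x\}]$ is an induced subgraph, which makes the forward direction easy. Given a \story $\mathcal S=\langle\tau,\{D_i\}\rangle$ of $G$, I would restrict $\tau$ to $V\setminus\{x\}$ and, at each step, restrict the corresponding frame to the vertices visible in the new \story. Deleting $x$ can only shorten lifespans, since $j_v=\max_{u\in N[v]}\tau(u)$ cannot increase when $N[v]$ loses $x$; hence every vertex visible in the new \story was already visible in $\mathcal S$, sub-drawings of planar drawings are planar, and positions and curves are inherited, so conditions (i)--(iv) hold. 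This settles the ``only if'' direction for \textbf{R.1}--\textbf{R.3} at once.

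For the backward direction of \textbf{R.1} and \textbf{R.2} the degree of $x$ is small and all its neighbors lie in $C$. If \textbf{R.1} applies, $x$ is a pendant on a single vertex $u\in C$; I would insert $x$ into $\tau$ immediately after $u$, so that $u$ is visible at $x$'s step, draw $x$ in a face incident to $u$, and let it disappear at the next step. If \textbf{R.2} applies, then because $|V_U|>1$ the graph $G'$ still contains a vertex $y$ of the same type $U=\{u_1,u_2\}$; I would insert $x$ into $\tau$ directly after $y$ and draw it as a parallel copy of $y$. As $y$ has degree $2$, both faces incident to $y$ have $u_1$ and $u_2$ on their boundary, so the edges $xu_1,xu_2$ can be routed in a thin strip alongside $yu_1,yu_2$ without creating a crossing in any frame.

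The backward direction of \textbf{R.3} is the interesting case and uses \cref{le:bipartite-3}. After the rule, $G'$ still contains three vertices $y_1,y_2,y_3$ of type $U$ with $|U|\ge 3$, so $\{y_1,y_2,y_3\}\cup U$ induces a $K_{3,|U|}$. Fixing a \story $\mathcal S'$ of $G'$, by \cref{le:bipartite-3} exactly one of the sides $\{y_1,y_2,y_3\}$ and $U$ is fixed (\cref{def:fixflex}). If $U$ is the fixed side, the twins are flexible and each is drawn in a face incident to all of $U$; I would then schedule $x$ at a fresh step inside the interval during which $U$ is fully present and draw it in such a face, so that $x$ behaves as one more flexible twin. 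This case is routine.

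The main obstacle is the case in which $\{y_1,y_2,y_3\}$ is the fixed side (so $U$ is flexible). Here $x$ is forced to join the fixed side: applying \cref{le:bipartite-3} to the $K_{3,3}$ on $\{y_1,y_2,x\}$ and any three vertices of $U$ shows that $x$ must be visible simultaneously with $y_1$ and $y_2$. The difficulty is that duplicating a vertex of degree at least $3$ inside a single planar drawing is in general impossible. The way around this is that \story{s} only require each individual frame to be planar: two edges may ``cross'' in the superimposed union of all frames as long as they never appear together in one frame. Since $K_{3,3}$ is non-planar, whenever all of $y_1,y_2,y_3$ are visible at most two vertices of $U$ are visible, and those two together with the three twins form a planar $K_{3,2}$ into which $x$ can always be inserted frame by frame. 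I would exploit this bounded co-presence to place $x$ as a parallel copy of one twin, routing each edge $xu$ alongside $y_1u$: a potential crossing of $xu$ with $y_1u'$ only matters in a frame in which both $u$ and $u'$ are present, which the co-presence bound rules out. Verifying that a \emph{single} fixed position of $x$ together with a \emph{single} choice of edge curves simultaneously yields a planar frame at every step is the technical heart of the argument, and is where I expect the real work to lie.
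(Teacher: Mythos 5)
Your overall architecture matches the paper's: the forward direction via heredity of admitting a \story, the backward direction rule by rule, and for \textbf{R.3} a case split, via \cref{le:bipartite-3} applied to the complete bipartite graph on $U$ and the surviving type-$U$ vertices, according to which side is fixed. Your treatments of \textbf{R.1} and \textbf{R.2} coincide with the paper's Cases A and B. However, both subcases of \textbf{R.3} have gaps, and you have the relative difficulty backwards: the subcase you dismiss as ``routine'' is exactly where the paper does its real work.

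In the subcase where $U$ is fixed, your plan fails as stated: at a generic fresh step inside the interval $I$ during which all of $U$ is visible, there need not exist any face incident to all of $U$. For instance, if two surviving twins $y_1,y_2$ are visible at that step together with $U$ and $|U|=3$, then in the drawn (planar) $K_{2,3}$ every face is incident to only two vertices of $U$, so $x$ cannot be inserted there; other visible vertices of $G'$ can obstruct as well. What is needed, and what the paper proves, is that some twin disappears \emph{strictly inside} $I$: each surviving twin's lifespan intersects $I$, and if the (interval) lifespans of three twins pairwise intersected, a Helly-type argument would place three twins and all of $U$ in one frame, yielding $K_{3,3}$ in a planar drawing, a contradiction. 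Hence two twins $v_1,v_2$ have disjoint lifespans $[i_1,j_1]$, $[i_2,j_2]$ with $j_1<i_2$ and both $j_1,i_2 \in I$; the paper then inserts $x$ with lifespan $[j_1+1,j_1+1]$, placing it at $v_1$'s exact point and reusing $v_1$'s curves, which makes planarity immediate. Nothing in your sketch identifies such a step or proves a suitable face exists. In the other subcase ($\{y_1,y_2,y_3\}$ fixed, $U$ flexible), your construction---shadow a twin, route $xu$ alongside $y_1u$, exploit that at most two vertices of $U$ are co-visible with all twins (\cref{le:bipartite-2})---is exactly the paper's, but your stated planarity reason is wrong: the co-presence bound does \emph{not} rule out frames in which two vertices $u,u'$ of $U$ are simultaneously visible (it only excludes three), so crossings of $xu$ with $y_1u'$ are not automatically harmless. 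The correct argument, as in the paper, is that in every frame of $x$'s lifespan the shadowed twin has at most two visible incident edges (the twins form an independent set), so the degree-two strip routing of \textbf{R.2} applies frame by frame; you explicitly defer this verification as ``where the real work lies,'' so the technical core of this subcase is also missing.
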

\begin{proof}[Sketch]
For the nontrivial direction, suppose that $G'$ admits a \story $\mathcal{S'}=\langle \tau', \{D'\}_{i\in [n']}\rangle$, where $n'=n-1$. 
\begin{figure}
\centering
\includegraphics[width=\textwidth,page=1]{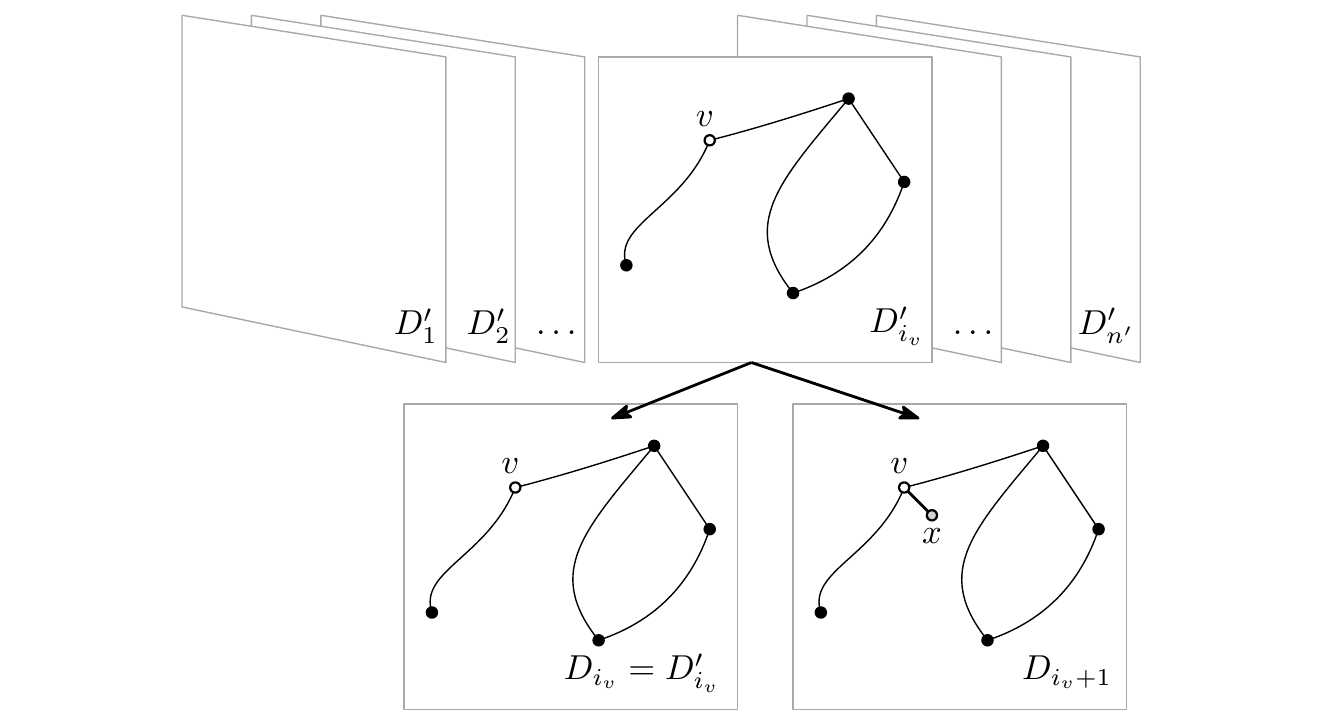}
\caption{Illustration for \textbf{Case A} of the proof of \cref{le:kernel}.\label{fi:fpt-1}}
\end{figure}
We can distinguish three cases based on the reduction rule applied to $G$. Here we only prove the simplest of the three cases, namely the case in which \textbf{R.1} is applied. See \cref{fi:fpt-1} for an illustration. Let $x$ be the vertex removed from $G$ to obtain $G'$ and let $v$ be its neighbor, whose lifespan according to $\tau'$ is $[i_v,j_v]$. We compute $\tau$ from $\tau'$ by inserting $x$ right after $v$, thus the lifespan of $x$ in $\tau$ is $[i_v+1,i_v+1]$. Similarly, we compute $\{D_i\}_{i\in [n]}$ from $\{D'\}_{i\in [n']}$ as follows. For each $i \le i_v$, we set $D_i = D'_i$. For $i=i_v+1$, we draw $x$ in $D'_{i_v}$ sufficiently close to $v$ such that edge $xv$ can be drawn as a straight-line segment that does not intersect any other edge. We then set $D_i$ to be equal to the resulting drawing. For each $i>i_v+1$, we set $D_i = D'_{i-1}$.
\end{proof}

Based on \cref{le:kernel}, we can construct an equivalent instance of  $G$ of size $O(2^{\vc})$ and use it to conclude the proof of \cref{th:fpt-vc} (see~\cref{apx:vcn}).

\subsection{Parameterization by Feedback Edge Set}\label{sse:fes}

A \emph{feedback edge set} of a graph $G=(V,E)$ is a set $F \subseteq E$ whose removal results in an acyclic graph, and the \emph{feedback edge set number} of $G$ is the minimum size of a feedback edge set of $G$. 
We prove the following.

\begin{theorem}\label{th:fpt-fes}
Let $G$ be a graph with $n$ vertices and feedback edge set of size $\fes=\fes(G)$.
Deciding whether $G$ admits a \story, and computing one if any, can be done in $O(2^{O(\fes \log \fes)}+n^2)$ time.
\end{theorem}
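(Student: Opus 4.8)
The plan is to follow the same kernelization strategy as for the vertex cover number (\cref{th:fpt-vc}): reduce $G$ to an equivalent instance $G'$ with $O(\fes)$ vertices, and then solve $G'$ by the brute-force algorithm of \cref{th:upper}, which runs in $2^{O(k\log k)}$ time on a $k$-vertex graph. With $k=O(\fes)$ this yields the claimed $2^{O(\fes\log\fes)}$ term, while the kernelization together with the final lifting of the \story back to $G$ accounts for the additive $O(n^2)$ term (a \story has $n$ frames of size at most $n$). A minimum feedback edge set $F$, with $|F|=\fes$, is obtained in linear time as the set of non-tree edges of any spanning forest, since $\fes=|E|-|V|+c$ where $c$ is the number of connected components.

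\textbf{Structure of the kernel.} First I would observe that the reduced graph is essentially the $2$-core of $G$ with its long degree-$2$ paths shortened. The two reduction rules are: (i)~delete any vertex of degree at most $1$; and (ii)~replace any maximal path of degree-$2$ vertices by a path of some fixed constant length, and replace a connected component that is a single cycle by a triangle. To bound the kernel size, let $G^\star$ be the graph after rule (i) is applied exhaustively, i.e.\ the $2$-core of $G$; since removing vertices never increases the feedback edge set number, $\fes(G^\star)\le\fes$. As $G^\star$ has minimum degree at least $2$, a standard degree-counting argument (summing degrees against $2|E(G^\star)|=2(|V(G^\star)|-c^\star+\fes(G^\star))$, with $c^\star$ the number of components of $G^\star$) shows that $G^\star$ has at most $2\fes$ vertices of degree at least $3$. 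Smoothing all degree-$2$ vertices turns $G^\star$ into a multigraph on at most $2\fes$ vertices with $O(\fes)$ edges, so $G^\star$ decomposes into $O(\fes)$ maximal degree-$2$ paths together with at most $\fes$ cycle components. After rule (ii) each of these contributes $O(1)$ vertices, giving a kernel on $O(\fes)$ vertices.

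\textbf{Correctness of the reductions.} Rule (i) is the easy part and mirrors \cref{le:kernel}: one direction is immediate by restriction, and conversely a vertex $x$ of degree at most $1$ can always be reinserted into a \story of $G-x$ right after its unique neighbour $u$ appears, drawn in a face incident to $u$; since $x$ is then visible for a single step, arbitrarily many pendant vertices (and, iterating from the core outwards, whole pendant trees) can be restored. For rule (ii), the lengthening direction is also easy: subdividing an edge $e$ of a degree-$2$ path preserves a \story, as the new vertex can be inserted right after both endpoints of $e$ have appeared and drawn on the curve of $e$; iterating, a \story of the short-path instance lifts to one of $G$.

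\textbf{Main obstacle.} The hard part is the shortening direction: given a \story of $G$, produce one for the instance with the shorter path. The naive idea of smoothing out an internal path vertex $w$ and reusing the union of its two incident curves fails, because $w$ need not be visible exactly when both of its neighbours are, and once $w$ has disappeared there may be frames in which its two neighbours are still visible but no longer share a face, so the required edge cannot be routed. The observation that rescues the argument is that the internal vertices of a maximal degree-$2$ path have no neighbours outside the path, hence their appearance order and their drawing may be changed without affecting any other vertex or edge of the \story. I would therefore reschedule these internal vertices into a monotone order inside the time window during which both endpoints $u$ and $v$ of the path are simultaneously visible, so that at every frame the visible portion of the path is contiguous and thus realizable as a single sub-curve of one fixed curve connecting $u$ and $v$; shortening then becomes a purely local operation on this contiguous curve. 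Making the rescheduling compatible with the externally determined lifespans of $u$ and $v$ is the technical core of the proof, and it is precisely this analysis that fixes the constant target length used in rule (ii).
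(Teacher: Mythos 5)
Your high-level plan (kernelize to $O(\fes)$ vertices via degree-one deletion and chain reduction, then run the $2^{O(k\log k)}$ brute force of \cref{th:upper} on the kernel and lift the \story back) is the same as the paper's, and your size analysis of the kernel is essentially the paper's counting argument. The gap is in the correctness of your rule~(ii). You chose to \emph{replace} a long degree-two path by a constant-length one, which forces you to prove the ``shortening'' direction: from a \story of $G$, produce one for the graph with the shorter path. You correctly identify this as the hard step, but the rescue you sketch does not work. First, the window ``during which both endpoints $u$ and $v$ of the path are simultaneously visible'' need not exist: in a \story of $G$ the chain can be consumed in a sliding fashion ($u$ appears, $u_1$ appears, $u$ disappears, $u_2$ appears, $u_1$ disappears, \dots, $v$ appears last), so $u$ and $v$ never coexist and there is no fixed curve between them to shorten. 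Second, the claim that the internal vertices' appearance order ``may be changed without affecting any other vertex or edge'' is false: a vertex cannot disappear until all its neighbors have appeared, so rescheduling $u_1$ (resp.\ $u_k$) changes the lifespan of $u$ (resp.\ $v$), and extending an endpoint's lifespan forces it and its incident edges into frames where they were not drawn before, potentially destroying planarity. So the proof of the key equivalence is missing precisely at its technical core.

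The paper sidesteps this by making the rule a \emph{deletion}: if a chain has $k\ge 3$ inner vertices, remove all of them (rule \textbf{R.B}). Then the direction from $G$ to the reduced graph is trivial, because admitting a \story is hereditary under taking induced subgraphs, and only the reinsertion direction needs work. Reinsertion is done with a trick your proposal lacks: insert $u_1$ right after $s$ and $u_k$ right after $t$ (since $k\ge 3$ they are non-adjacent, so each behaves like a pendant vertex and sits as an isolated point after its anchor disappears), and defer $u_2,\dots,u_{k-1}$ to the very end of the total order. At those final steps every other vertex of the graph has already disappeared, so each frame contains at most three chain vertices and a path, and planarity is immediate. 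Your kernel statement itself is salvageable --- indeed, for a target length with at least three inner vertices, the shortening direction follows by composing heredity (delete all inner vertices) with this deferred reinsertion of a shorter chain --- but that argument is exactly the paper's lemma, not the rescheduling you propose.
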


\medskip\noindent\textbf{Algorithm Description.} A \emph{$k$-chain} of $G$ is a path with $k+2$ vertices and such that its $k$ inner vertices all have degree two. We define two reduction rules.

\smallskip\noindent\textbf{R.A:} {\em If there exists a vertex of degree one, then remove it from $G$.}

\smallskip\noindent\textbf{R.B:} {\em If there exists a $k$-chain with $k \ge 3$, then remove its inner vertices from~$G$.}

\smallskip Based on the above reduction rules we can prove the following.

\begin{restatable}[$\star$]{lemma}{lekernelfessize}\label{le:kernel-fes-size}
\storyp parameterized by feedback edge set number admits a kernel of linear size.
\end{restatable}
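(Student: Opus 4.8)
The plan is to establish three things about the rules R.A and R.B: that they are safe (preserve the existence of a \story), that a graph to which neither applies has $O(\fes)$ vertices, and that the reduction runs in polynomial time. The size bound is the headline, but the safeness of R.B is the crux.

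Safeness of R.A is handled exactly as \textbf{Case A} in the proof of \cref{le:kernel}: a degree-one vertex $x$ with neighbour $v$ can be deleted from any \story (a subdrawing of a planar drawing is planar), and conversely reinserted into a \story of $G-x$ by making $x$ appear immediately after $v$ and drawing the pendant edge as a short segment next to $v$. The delicate rule is R.B, where I would prove that replacing a chain of degree-two vertices by a shorter one (keeping a bounded number of internal vertices, so that the graph stays simple) preserves the existence of a \story. In one direction, given a \story of the longer instance, the positions of the suppressed internal vertices are dropped and the polyline they traced is reused as the curve of the single replacement edge, after re-timing $\tau$ so that the two retained endpoints receive consistent lifespans. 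In the other direction, given a \story of the shorter instance, the extra degree-two vertices are inserted as subdivision points along the already-drawn curve of the chain and made to appear consecutively in $\tau$; a degree-two vertex placed on an existing arc never creates a crossing, so every frame stays planar.

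For the size bound, let $G'$ be a graph to which neither rule applies. Since a tree always has a leaf, the tree components are annihilated by R.A, so $G'$ has minimum degree at least two; moreover neither rule increases the feedback edge set number (R.A deletes a bridge, and suppressing a degree-two vertex decreases $m$ and $n$ by one), hence $\fes(G')\le \fes$. Writing $n_2$ and $n_{\ge 3}$ for the numbers of vertices of degree two and of degree at least three, the handshake inequality $2m' = \sum_v \deg(v) \ge 2n_2 + 3n_{\ge 3}$ together with $m'-n'+c' = \fes(G')$ gives $n_{\ge 3} \le 2(m'-n') \le 2\fes$. To bound $n_2$, suppress all degree-two vertices to obtain a multigraph $H$ on the $n_{\ge 3}=O(\fes)$ branch vertices; then $|E(H)| = |V(H)| - c_H + \fes(H) = O(\fes)$, and each edge of $H$ corresponds to one maximal path of degree-two vertices of $G'$, which by R.B carries only a constant number of them. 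The only other place degree-two vertices can live when the minimum degree is two is inside components that are single cycles; these are reduced by R.B to constant size, and there are at most $\fes$ of them since each contributes at least one to the cyclomatic number. Summing, $n_2 = O(\fes)$ and hence $|V(G')| = n_2 + n_{\ge 3} = O(\fes)$, a linear kernel.

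Finally, each rule strictly decreases the number of vertices and its applicability is tested in linear time, so at most $n$ rounds suffice and the kernel is produced in $O(n^2)$ time, which is the bound used in \cref{th:fpt-fes} before the brute force of \cref{th:upper} is applied to the kernel. The main obstacle I anticipate is the safeness of R.B: unlike the local pendant deletion of R.A it is a topological shortening, so the argument must simultaneously control the planar routing of the chain's curve and the lifespans induced by $\tau$ in order to certify that every intermediate frame remains planar and respects the fixed-position and fixed-curve conditions of a \story.
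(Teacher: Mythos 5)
Your overall skeleton is the right one (safeness of the rules, an $O(\fes)$ bound on the size of an irreducible instance, polynomial running time), and your counting argument for the size bound is a valid alternative to the paper's: the paper removes the surviving feedback edges $F^*$ and counts leaves, branch vertices and degree-two vertices of the resulting forest $H^*$, while you count via the handshake inequality plus suppression of degree-two vertices into a multigraph; both give $O(\fes)$. The genuine gap is in the safeness of \textbf{R.B}, and it stems from the fact that you silently changed the rule. The paper's \textbf{R.B} \emph{deletes} all $k$ inner vertices of a $k$-chain ($k \ge 3$), so the reduced graph $G'$ is an induced subgraph of $G$; the direction ``$G$ admits a \story $\Rightarrow$ $G'$ does'' is then immediate by heredity, and all the work goes into reinsertion (\cref{le:kernel-fes}). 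Your rule instead \emph{shortens} the chain, so the contraction direction is no longer trivial, and the argument you give for it fails. You propose to reuse the polyline traced by the suppressed vertices as the fixed curve of the replacement edge, after re-timing $\tau$. But planarity in a \story is only per frame: as soon as an inner vertex $u_i$ disappears, later frames may contain edges that cross the (no longer present) curves $u_{i-1}u_i$ and $u_iu_{i+1}$; nothing in the definition forbids this. In the shortened instance the two retained chain vertices become adjacent, so the earlier one must stay visible until the later one appears, i.e., the replacement edge must be drawn precisely in frames where, in the original \story, the middle of the chain may long since have vanished and its trace may be crossed by other edges. Hence the reused polyline need not be crossing-free where it is needed, and conditions (ii)--(iv) can fail. (Your rule \emph{is} provably safe if the shortened chain keeps at least three inner vertices, since one can then apply the paper's deletion rule to both $G$ and $G'$ and compare the results; but that is a different proof from the one you sketched, and if you shorten to two or fewer inner vertices even this detour is unavailable.)

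The reinsertion direction of your argument (adding degree-two vertices as subdivision points on an already drawn curve) is essentially sound, provided you insert the new vertices in $\tau$ immediately after the later endpoint of the subdivided edge appears, so that the forced lifespan extensions are confined to a single new frame whose planarity you construct yourself --- a point worth making explicit. Note, however, that this idea cannot substitute for the paper's reinsertion argument, because after the paper's rule there is no curve left to subdivide: $s$ and $t$ need never be simultaneously visible and may be separated by arbitrary parts of the drawing. The paper's trick is to let $u_1$ appear right after $s$ and $u_k$ right after $t$ (drawn as short pendant segments), and to schedule $u_2, \dots, u_{k-1}$ at the very end of the order, when every other vertex has disappeared, so that the path joining the two stubs is drawn on an essentially empty canvas; this is also exactly where $k \ge 3$ is used, since it guarantees that $u_1u_k$ is not an edge and no curve between the stubs is needed before those final frames. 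If you adopt the paper's rule verbatim, your size-bound computation goes through with only cosmetic changes (surviving chains have at most two inner vertices; cycle components of length at least five disappear entirely under \textbf{R.B} followed by \textbf{R.A}; and the isolated vertices left over from tree components should be discarded explicitly, since they have degree zero, not two).
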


\noindent To conclude the proof of \cref{th:fpt-fes}, observe that computing a linear kernel $G^*$ of $G$, i.e., applying exhaustively the reduction rules \textbf{R.A} and \textbf{R.B}, can be done in $O(n+\fes)$ time. Afterwards, following the lines of the proof of \cref{th:fpt-vc}, we can brute-force a solution for $G^*$ (if any) in $2^{O(\fes \log \fes)}$ time, and reinsert the missing $O(n)$ vertices each in $O(n)$ time (as detailed \cref{le:kernel-fes} of \cref{ap:fes}).

\subsection{Partial $3$-trees}\label{sse:threetrees}


A \emph{$k$-tree} has a recursive definition: A complete graph with $k$ vertices is a $k$-tree; for any $k$-tree $H$, the graph obtained from $H$ by adding a new vertex $v$ connected to a clique $C$ of $H$ of size $k$ is a $k$-tree; $C$ is the \emph{parent clique} of $v$. A \emph{partial $k$-tree} is a subgraph of a $k$-tree and partial $k$-trees are exactly the graphs of treewidth at most $k$. Since $2$-trees are planar, they admit a \story by \cref{th:pathwidth}. We prove that the same holds for partial $3$-trees (which may be not planar).

\newcommand{\T}{\mathcal T}

\begin{restatable}[$\star$]{theorem}{ththreetrees}\label{th:3trees}
Every partial $3$-tree $G$ with $n$ vertices admits a \story, which can be computed in $O(n)$ time.
\end{restatable}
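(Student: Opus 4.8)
The plan is to exploit the recursive structure of $3$-trees directly to build a \story, rather than going through pathwidth (which would not suffice, since partial $3$-trees can have unbounded pathwidth). First I would reduce to the case where $G$ is a genuine $3$-tree: if $G$ is a partial $3$-tree, then $G$ is a subgraph of some $3$-tree $H$ on the same vertex set, and since removing edges only shrinks neighborhoods and lifespans (and can only make frames sparser and easier to keep planar), any \story of $H$ restricts to a valid \story of $G$ with no larger width. So it suffices to produce a \story for an arbitrary $3$-tree. I would fix the construction tree of $H$, that is, the sequence $v_4, v_5, \dots, v_n$ in which vertices are attached, each $v_t$ being joined to a triangle (its parent clique) among the previously introduced vertices.

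The key idea is to choose the order $\tau$ to be essentially the \emph{reverse} of the construction order, so that when a vertex $v_t$ appears, its parent clique is already present and its ``children'' have mostly come and gone. More precisely, I would maintain the invariant that at each step the drawn graph is an induced subgraph of $H$ that is itself a $3$-tree (or a subgraph thereof), together with a fixed planar embedding of a small interface consisting of the currently ``active'' triangles through which future vertices will attach. The central structural fact I would prove is that at every step the frame is planar and, crucially, that each new vertex can be inserted into a face of the current drawing incident to all three vertices of its parent triangle. Since in a $3$-tree each triangle that serves as a parent clique can be drawn so that its three vertices bound a common face, this placement is always possible; the point (curve) of a vertex (edge) never moves once drawn, satisfying conditions (iii)--(iv), and planarity of each frame gives (ii).

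I would carry out the steps as follows. First, set up the construction tree of $H$ and define $\tau$. Second, describe the incremental drawing: process vertices in the chosen order, and when $v_t$ appears, locate the face of the current frame bounded by (or incident to) its parent triangle $\{a,b,c\}$ and place $v_t$ there with its three edges $v_ta, v_tb, v_tc$ routed inside that face without crossings; this subdivides the face into three new triangular faces, preserving the invariant that every triangle of $H$ that will later serve as a parent clique still bounds a face. Third, verify the lifespan/disappearance bookkeeping: a vertex may be removed from a frame only once all its neighbors have appeared, which I would align with the attachment structure so that the interface stays bounded and the invariant is maintainable. Fourth, argue the linear running time, since the construction tree and the face-insertion operations can each be handled in $O(1)$ amortized time with an appropriate embedding data structure, for a total of $O(n)$.

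The main obstacle I anticipate is the \emph{planarity} of the frames: a $3$-tree need not be planar (e.g.\ $K_5$ is a $3$-tree), so I cannot simply draw all of $H$ at once. The entire force of the theorem lies in the fact that, because of the unconstrained lifespans, at any given step only a \emph{bounded} and structurally simple portion of $H$ needs to be simultaneously visible, and this portion can be kept planar. The delicate part is therefore to prove that the order $\tau$ can be chosen so that each frame is not merely small but actually induces a planar subgraph with a face available at the parent triangle of the next vertex---in particular, to show that we never need more than a bounded ``frontier'' of active triangles visible at once, and that edges already committed to the plane never later force a crossing. I expect this invariant-maintenance argument, carefully matching the disappearance rule to the $3$-tree attachment order, to be the technical heart of the proof.
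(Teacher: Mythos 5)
Your high-level skeleton matches the paper's (reduce to genuine $3$-trees, then exploit the attachment structure to insert each new vertex into a face bounded by its parent triangle), but there is a genuine gap, and it sits exactly where you defer to ``the technical heart'': the choice of the order $\tau$. The order you specify does not work under either reading. Taken literally, ``the reverse of the construction order'' is a perfect elimination ordering, under which a vertex appears \emph{before} its parent clique; on the $3$-tree obtained by attaching $m\ge 3$ vertices $u_1,\dots,u_m$ to the initial triangle $\{a,b,c\}$, no $u_i$ may disappear before all of $a,b,c$ have appeared, so the final frame contains the whole graph, which contains $K_{3,3}$. (Your invariant statement is also internally inconsistent: ``its parent clique is already present'' describes the forward order, while ``its children have come and gone'' describes the reverse one.) Taken charitably as a \emph{forward} construction order, an arbitrary such order still fails: attach $u_1,u_2,u_3$ to $\{a,b,c\}$, then $p$ to $\{a,b,u_1\}$ and $q$ to $\{a,b,u_2\}$, and use the valid construction order $a,b,c,u_1,u_2,u_3,p,q$. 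When $u_3$ appears, $u_1$ and $u_2$ are still visible (they await $p$ and $q$, respectively), as are $a,b,c$, so the frame induces $K_{3,3}$ plus a triangle and is non-planar; no drawing scheme can rescue this. Equivalently, your invariant that every triangle which will later serve as a parent clique still bounds a face is not maintainable: once $u_1$ and $u_2$ occupy the two faces bounded by $a,b,c$, no face incident to all of $a,b,c$ remains for $u_3$.

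The missing idea, which is what the paper supplies, is that $\tau$ must be a \emph{preorder (depth-first) traversal} of the construction tree $\T$, whose root is the initial triangle and in which the node of a vertex is a child of the node owning the triangle it attaches to. With a preorder, when the next vertex appears, every vertex in an already-visited subtree has all of its neighbors present (the paper's \cref{pr:3-tree-two}) and hence has disappeared, so the visible graph is always a subgraph of the graph $G_{\mu}$ spanned by the current root-to-node path, which is an embedded planar $3$-tree whose active triangles are internal faces (the paper's \cref{pr:3-tree-one}). This alignment of the disappearance rule with a DFS of the attachment tree is precisely the invariant-maintenance argument you anticipated but did not prove, and as the examples above show it genuinely fails for non-DFS orders. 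Your reduction to full $3$-trees and the face-insertion step are fine in outline; for the $O(n)$ bound the paper additionally builds a single embedded planar supergraph $G^*$ (one representative per attachment site), draws it once, and realizes every frame as a subdrawing of that fixed drawing, which is a cleaner way to guarantee that vertex positions never move across frames.
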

\begin{proof}[Sketch]
	We shall assume that $G$ is a (non-partial) $3$-tree. Indeed, if $G$ is a partial $3$-tree, a supergraph of $G$ that is a  $3$-tree always exists by definition. 
	\begin{figure}[htbp]
		\centering
		\subfigure[]{\label{fi:3-trees-a}\includegraphics[width=0.4\linewidth,page=1]{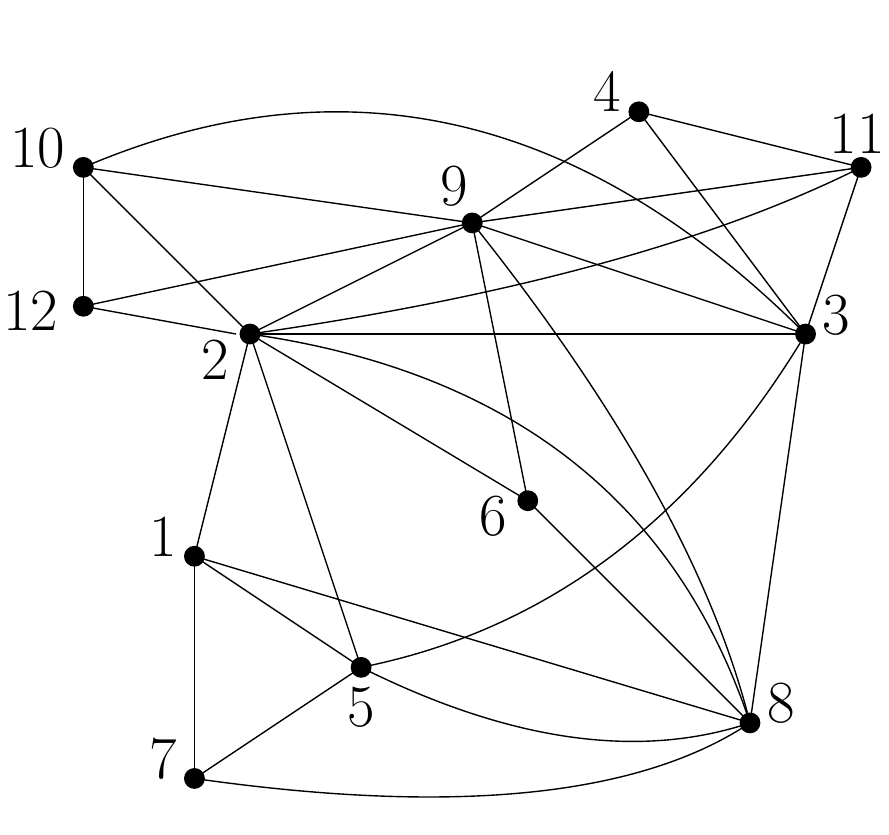}}\hfill
		\subfigure[]{\label{fi:3-trees-b}\includegraphics[width=0.5\linewidth,page=2]{figs/3-trees}}\\
		\subfigure[]{\label{fi:3-trees-c}\includegraphics[width=0.45\linewidth,page=3]{figs/3-trees}}\hfill
		\caption{(a) A $3$-tree $G$. (b) The decomposition tree $\T$ of $G$. (c) The subgraph $G_{\mu}$ of $G$ associated with  bag $\mu$, highlighted in (b). 
		\label{fi:3-trees}}
	\end{figure}
	We now construct a specific tree decomposition $\T$ of $G$ that will be used to compute its \story; refer to \cref{fi:3-trees}. For a definition of tree decomposition see~\cite{ROBERTSON1986309}. The subgraph $C_{\mu}$ induced by the vertices of each bag $\mu$ of $\T$ is the \emph{subgraph associated with $\mu$} and it is a $4$-clique for each bag $\mu$ of $T$, except for the root $\rho$ of $\T$ for which $C_{\rho}$ is the initial $3$-cycle.  The subgraph $C_{\mu}$ contains four $3$-cliques, three of them are \emph{active} (this means that they can appear in some subgraph $C_{\nu}$ associated with a child $\nu$ of $\mu$) and one is \emph{non-active}. The unique $3$-clique in $C_{\rho}$ is active. Each bag $\mu$ of $\T$ has one child $\nu$ for each vertex $v$ whose parent clique is an active $3$-clique of $\mu$.  The $4$-clique $C_\nu$ consists of the parent clique $C$ of $v$, vertex $v$, and the edges connecting $v$ to $C$; $C$ is non-active in $\nu$, while the other three $3$-cliques are active. For each bag $\mu$ distinct from $\rho$, we denote by $v_{\mu}$ the vertex shared by the three active $3$-cliques of $C_{\mu}$. We say that $v_{\mu}$ is \emph{associated with} $\mu$. One easily verifies that $\T$ is a tree decomposition. Also, $\T$ has $n-2$ bags: the root and a bag for each vertex of $G$ that is not in the initial $3$-cycle. 
	
	We now associate to each bag $\mu$ a subgraph $G_\mu$ of $G$. For the root $\rho$, the subgraph $G_{\rho}$ is the initial $3$-cycle. For a bag $\mu$ with parent $\lambda$, the subgraph $G_\mu$ is obtained from $G_{\lambda}$ by connecting $v_{\mu}$ to the vertices of its parent clique. 
	
	\begin{property}\label{pr:3-tree-one}
		Every graph $G_{\mu}$ is an embedded planar $3$-tree such that the active $3$-cliques of $C_{\mu}$ are internal faces of $G_{\mu}$.
	\end{property}

	The proof of \cref{pr:3-tree-one} is by induction on the length of the path from the root $\rho$ to $\mu$ in $\T$. The graph $G_{\rho}$ consists of a $3$-cycle, which is the unique active $3$-clique and which is both an internal and an external face. The graph $G_{\mu}$ is obtained by adding the vertex $v_{\mu}$ to $G_{\lambda}$ and connecting it to its parent clique $C$, which is active in $C_{\lambda}$. By induction, $C$ is an internal face of $G_{\lambda}$ and therefore, by placing $v_{\mu}$ inside this face, we obtain an embedded planar $3$-tree such that the active faces of $C_{\mu}$ are the three faces created by the addition of $v_{\mu}$ inside $C$.    
	
	Let $\mu \neq \rho$ be a bag of $\T$; the next property follows from the definition of $\T$.
	
	\begin{property}\label{pr:3-tree-two}
		The neighbors of $v_{\mu}$ distinct from those of its parent clique are all vertices associated with bags of the subtree of $\T$ rooted at $\mu$.
	\end{property}

	Let $\rho=\mu_1, \mu_2, \dots, \mu_{n-2}$ be an order of the bags of $\T$ according to a preorder visit of $\T$. To create a \story of $G$, we define an ordering $\tau: v_1, v_2, \dots, v_n$ of the vertices of $G$ such that $v_1$, $v_2$, and $v_3$ are the vertices of the initial $3$-cycle, and each $v_i$ with $i > 3$ is the vertex associated with $\mu_{i-2}$.
	
	
Let $G_i$ be the graph induced by the vertices that are visible at step $i$, for $i \geq 3$; by \cref{pr:3-tree-two} the graph $G_i$ is a subgraph of $G_{\mu_i}$ which, by \cref{pr:3-tree-one} is an embedded planar $3$-tree such that the three active $3$-cliques of $C_{\mu_i}$ are faces of $G_{\mu_i}$. To simplify the description we prove that there exists a \story $\mathcal{S}=\langle \tau, \{D_i\}_{i\in [n]} \rangle$, where each $D_i$ is a drawing of $G_{\mu_i}$. This implies that there exists a \story where each $D_i$ is a drawing of $G_i$. Let $\mu_j$ be the parent of $\mu_i$ in $\T$; since the order $\tau$ corresponds to a preorder of the bags of $\T$, we have $j < i$. Moreover, all bags $\mu_k$ with $j < k <i$, if any, belong to the subtrees of $\mu_j$ visited before $\mu_i$ and for each such subtree $\T'$ no other bag of $\T'$ exists before $\mu_j$ or after $\mu_i$. By \cref{pr:3-tree-two} all the vertices associated with the bags $\mu_k$ that belong to $G_{\mu_{i-1}}$ do not have any neighbor after $v_{\mu_i}$ and therefore they can be removed. The removal of these vertices transforms $G_{\mu_{i-1}}$ into $G_{\mu_j}$ (all the vertices associated with the bags $\mu_k$ for $j < k <i$ had been added to $G_{\mu_j}$ that had never been changed). By \cref{pr:3-tree-one} the active $3$-cliques  of $G_{\mu_j}$ are faces of $G_{\mu_j}$. It follows that there exists a \story $\mathcal{S}=\langle \tau, \{D_i\}_{i\in [n]} \rangle$ whose frames $D_i$ are as follows. $D_1$ is a planar drawing of a $3$-cycle; given  $D_{i-1}$ of $G_{\mu_{i-1}}$, a drawing $D_i$ of $G_{\mu_i}$ can be computed by removing all vertices associated with the bags $\mu_k$ for $j < k <i$, and adding $v_{\mu_i}$ inside a face of $D_j$.

	The above \story can be computed in $O(n)$ time, see \cref{ap:3tree}.
\end{proof}

\section{Complexity with Fixed Order}\label{se:storypfo}

In this section we study the \storypfo variant of \storyp, defined below. This variant is closer to the setting studied in~\cite{DBLP:journals/jgaa/BorrazzoLBFP20} and it models the case in which the way the graph changes over time is prescribed.  

\medskip\noindent\fbox{%
  \parbox{0.95\textwidth}{
    \storypfo\\
    \textbf{Input:} Graph $G=(V,E)$ with $n$ vertices, total order $\tau:V \rightarrow [n]$\\
    \textbf{Question:} Does $G$ admit a \story $\mathcal{S}=\langle \tau, \{D_i\}_{i\in [n]} \rangle$?
  }%
}

\medskip\noindent We prove that \storypfo is \NP-complete by reducing from \ssefe. Let $G_1,\dots,G_k$ be $k$ graphs on the same set $V$ of vertices. A \emph{simultaneous embedding with fixed edges (SEFE)} of $G_1,\dots,G_k$ consists of $k$ planar drawings $\Gamma_1,\dots,\Gamma_k$  of $G_1,\dots,G_k$, respectively, such that each vertex  is mapped to the same point in every drawing and each shared edge is represented by the same simple curve in all drawings sharing it. The \sefe problem asks whether $k$ input graphs on the same set of vertices admit a SEFE, and it is \NP-complete  even when the pairwise intersection between any two input graphs is the same over all pairs of graphs~\cite{ANGELINI201571,DBLP:journals/jgaa/Schaefer13}. This variant is called \ssefe, and the result in~\cite{ANGELINI201571,DBLP:journals/jgaa/Schaefer13} proves \NP-completeness already when $k=3$. 

\begin{figure}[t]
\centering
\subfigure[]{\includegraphics[page=1]{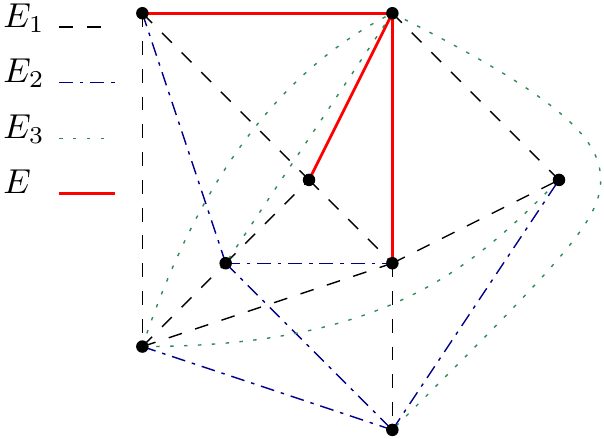}}\hfil
\subfigure[]{\includegraphics[page=2]{figs/fo-hard}}
\caption{Illustration for \cref{th:fohardness}. (a) An instance $G_1,G_2,G_3$ of \ssefe; 
	(b) The instance $G$ constructed from $G_1,G_2,G_3$; the subdivision vertices are circles with a white fill while the spectators are squares.\label{fi:fohard}}
\end{figure}

\medskip\noindent\textbf{Construction.} Refer to \cref{fi:fohard} for an example. Let $G_1,G_2,G_3$ be an instance of \ssefe. Let $V$ be the common vertex set of the three graphs, let $E$ be the common edge set, and let $E_i$ be the exclusive edge set of $G_i$ for $i=1,2,3$. We construct an instance $\langle G, \tau \rangle$ of \storypfo as follows. Graph $G$ contains all vertices in $V$ and all edges in $E$. Also, for each edge $e=uv$ in $E_i$, it contains a vertex $w^i_e$, called \emph{a subdivion vertex of $E_i$}, and the edges $uw^i_e$ and $vw^i_e$ (i.e., it contains the edge $e$ subdivided once). Moreover, for each vertex $z$, either a vertex in $V$ or a subdivision vertex of an edge, $G$ contains an additional vertex $s_z$, called the \emph{spectator of $z$}, and the edge $zs_z$. To obtain the total order $\tau$ we group the vertices of $G$ in a set of blocks $B_1,\dots,B_8$, and we order the blocks by increasing index, while vertices within the same block can be ordered arbitrarily. We denote by $\tau^-_i$ and $\tau^+_i$ the position in $\tau$ of the first and of the last vertex of $B_i$, respectively, for each $i=1,\dots,8$. Block $B_1$ contains all vertices in $V$; for $i \in \{2,4,6\}$, block $B_i$ contains all subdivision vertices of $E_{\frac{i}{2}}$, while block $B_{i+1}$ contains all spectators of the vertices in $B_i$; finally $B_8$ contains all spectators of the vertices in $B_1$.

\begin{restatable}[$\star$]{theorem}{thfohardness}\label{th:fohardness}
The \storypfo problem is \NP-complete.
\end{restatable}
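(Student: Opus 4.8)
The plan is to prove both membership in \NP and \NP-hardness, the latter via the reduction from \ssefe already set up above. Membership follows from the nondeterministic guess-and-verify scheme that places \storyp in \NP (\cref{th:upper}): with $\tau$ now supplied as input rather than guessed, the same procedure over planar embeddings of the visible subgraphs certifies a \story, so \storypfo is in \NP.

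The core of the hardness proof is a precise reading of the vertex lifespans forced by $\tau$, which I would establish first. Every vertex $z$ of $G$ has a spectator $s_z$ that appears strictly later, so $z$ stays visible at least until $s_z$ does. Two consequences drive the reduction. First, because the spectators of the vertices in $V$ form the last block $B_8$, every vertex of $V$ is visible throughout blocks $B_2,\dots,B_7$; thus all three ``snapshots'' share one fixed placement of $V$ and, by property (iv), one fixed drawing of the common edges $E$ (this common part is exactly the pairwise intersection guaranteed by the sunflower structure). Second, because the spectators of the subdivision vertices of $E_i$ form block $B_{2i+1}$, all subdivision vertices of $E_i$ are simultaneously visible at the last step of $B_{2i}$, whereas those of $E_{i'}$ with $i'<i$ have already disappeared (their spectators lie in the earlier block $B_{2i'+1}$) and those with $i'>i$ have not yet appeared. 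Hence at that step the visible subgraph is exactly $G_i$ with each exclusive edge subdivided once, and planarity of this frame is equivalent to planarity of $G_i$.

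For the forward direction I would start from planar drawings $\Gamma_1,\Gamma_2,\Gamma_3$ realizing a SEFE, place each $v\in V$ at its common point and draw each $e\in E$ along its common curve, revealing $V$ during $B_1$. For each $i$ in turn I reveal during $B_{2i}$ the subdivision vertex $w^i_e$ of every $e=uv\in E_i$, placing it on the curve of $e$ in $\Gamma_i$ and drawing $uw^i_e,vw^i_e$ as the two halves of that curve, so the frame is $\Gamma_i$ subdivided and hence planar; I then reveal the corresponding spectators during $B_{2i+1}$ as pendants, after which the subdivision vertices of $E_i$ disappear and free the region they occupied for $E_{i+1}$. The spectators of $V$ are revealed during $B_8$ as pendants. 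Every frame is a subdrawing of some subdivided $\Gamma_i$ together with pendant vertices, hence planar. For the reverse direction I read off from the \story the fixed placement of $V$ and the fixed curves of $E$ guaranteed by the first consequence; by the second consequence, the frame at the end of $B_{2i}$ is a planar drawing of $G_i$ with its exclusive edges subdivided (no spectator is present at that step, so each $w^i_e$ has degree exactly two), and suppressing the degree-two subdivision vertices yields a planar drawing $\Gamma_i$ of $G_i$ agreeing with the others on $V$ and $E$. Therefore $\Gamma_1,\Gamma_2,\Gamma_3$ form a SEFE.

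The step I expect to be the main obstacle is making the lifespan analysis airtight: arguing that no unintended vertex is visible in the critical frame of $G_i$, so that the snapshot is exactly $G_i$ rather than a larger graph mixing exclusive edges of two inputs, and that the disappearance of the subdivision vertices of $E_i$ genuinely releases the plane for $E_{i+1}$ even though the placement of $V$ is frozen. This decoupling of the ``existence in time'' of the three exclusive edge sets---achieved by the subdivision-plus-spectator mechanism---is exactly what allows mutually crossing SEFE edges to coexist with a single fixed placement of $V$.
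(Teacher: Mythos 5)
Your proposal is correct and follows essentially the same route as the paper: the same block-and-lifespan analysis (spectators in $B_8$ pin down $V$ and the common edges $E$; spectators in $B_{2i+1}$ confine the subdivision vertices of $E_i$ to their block), the same SEFE-to-\story construction via subdivided exclusive edges and pendant spectators, and the same reverse extraction of planar drawings of each subdivided $G_i$ followed by suppression of the degree-two subdivision vertices, with \NP-membership inherited from \cref{th:upper}. The only cosmetic difference is that you read off the single frame at the last step of $B_{2i}$, whereas the paper takes a union of frames over a block range; these coincide, since that last frame already contains all of $V$ and all subdivision vertices of $E_i$.
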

\begin{proof}[Sketch]
At a high level, the total order $\tau$ is designed to show the three graphs one by one while keeping the common edge set visible. In particular, a spectator vertex $s_v$ forces vertex $v$ to stay visible until $s_v$ appears, while a subdivision vertex $w^i_e$ makes edge $e$ visible only when $G_i$ must be drawn.
\end{proof}

\section{Discussion and Open Problems}\label{se:open}

Our work can stimulate further research based on several possible directions.

\begin{itemize}

\item It would be interesting to study further parameterizations of \storyp. Is \storyp parameterized by treewidth (pathwidth) in \XP? In addition, we note that if the total order is fixed, then  an \FPT algorithm in the size of the largest frame (or the length of the longest lifespan) readily follows from the proof of \cref{th:upper}.

\item Conditions (iii) and (iv) of the definition of a \story can be replaced by the existence of a sequence of planar embeddings in which common subgraphs keep the same embedding. This is not true if we study more geometric versions of the problem, in which for instance edges are straight-line segments and/or vertices are restricted on an integer grid of fixed size (as in~\cite{DBLP:journals/jgaa/BorrazzoLBFP20}).

\item Condition (ii) of \story can be relaxed so to only allow specific crossing patterns~\cite{DBLP:journals/csur/DidimoLM19,DBLP:books/sp/20/HT2020}, e.g., right-angle crossings or few crossings per edge.

\end{itemize}

\bibliographystyle{splncs04}
\bibliography{bibliography}

\clearpage
\appendix
\section*{Appendix}

\section{Missing Proofs of \cref{se:preliminaries}}

A \emph{path decomposition} of a graph $G=(V,E)$ is a sequence $\{X_i\}_{i \in [h]}$ of subsets of $V$, called \emph{bags}, for some integer $h$, such that: (1) for each edge $e=uv \in E$, there exists a bag $X_i$ containing both $u$ and $v$, and (2) for every three indices $i \le j \le k$, $X_i \cap X_k \subseteq X_j$. The \emph{width} of $\{X_i\}_{i \in [h]}$ is equal to $\max_{i \in [h]} |X_i| -1$, and the \emph{pathwidth} of $G$ is the minimum width over all its path decompositions.

In order to prove \cref{th:pathwidth}, we begin with the following lemma.

\begin{lemma}\label{le:pathwidth}
	Let $G=(V,E)$ be a graph. If $G$ admits a \story $\mathcal{S}=\langle \tau, \{D_i\}_{i\in [n]} \rangle$ of width $\omega=w(\mathcal{S})$, then it admits a path decomposition of width at most $\omega$.
\end{lemma}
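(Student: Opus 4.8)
The plan is to take the obvious candidate path decomposition and verify the axioms one by one. For each $i \in [n]$ I would define the bag $X_i$ to be the set of vertices \emph{visible} at step $i$, which by property (i) of a \story is exactly the vertex set of the frame $D_i$. With this choice the width bound is immediate: $|X_i| = |D_i|$, so $\max_{i \in [n]} |X_i| - 1 = \max_{i \in [n]} |D_i| - 1 = \omega$. It then remains to check the two defining conditions of a path decomposition.

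First I would verify the edge-covering condition. Fix an edge $e = uv \in E$ and assume without loss of generality that $\tau(u) \le \tau(v)$. At step $\tau(v) = i_v$ the vertex $v$ has just appeared and is therefore visible. For $u$ we have $i_u = \tau(u) \le \tau(v)$, and since $v \in N[u]$ the definition $j_u = \max_{w \in N[u]} \tau(w)$ gives $j_u \ge \tau(v)$; hence $\tau(v) \in [i_u, j_u]$ and $u$ is visible at step $\tau(v)$ as well. Thus $\{u,v\} \subseteq X_{\tau(v)}$, so some bag contains both endpoints of $e$.

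Second, the interval (connectivity) condition follows directly from the fact that the lifespan of every vertex is a contiguous interval. By definition a vertex $v$ is visible precisely at the steps $i \in [i_v, j_v]$, so for any three indices $i \le j \le k$, a vertex $v \in X_i \cap X_k$ satisfies $i_v \le i$ and $k \le j_v$, whence $i_v \le i \le j \le k \le j_v$ and $v$ is visible at step $j$ too, i.e. $v \in X_j$. This establishes $X_i \cap X_k \subseteq X_j$ and completes the verification.

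I do not expect any genuine obstacle here: the argument relies only on property (i) of a \story together with the interval structure of the lifespans, and the planarity and consistency conditions (ii)--(iv) are not needed (which is exactly why \cref{th:pathwidth} later phrases the converse as requiring these extra conditions). The only step that deserves a little care is the edge-covering argument, where one must explicitly invoke $j_u = \max_{w \in N[u]} \tau(w)$ to guarantee that the later endpoint of an edge still lies within the lifespan of the earlier one.
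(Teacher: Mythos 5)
Your proof is correct, and it reaches the conclusion by a genuinely more direct route than the paper. The paper does not build the decomposition explicitly: it argues by contradiction that the width $\omega$ of the \story bounds the vertex separation number of the order $\tau$ (if $\omega+1$ vertices at positions $\le i$ each had a neighbor at a position $>i$, all of them would still be visible at the step where the earliest such neighbor appears, forcing a frame of size at least $\omega+2$), and then it invokes the known equivalence between vertex separation number and pathwidth~\cite{DBLP:journals/ipl/Kinnersley92} to obtain the decomposition. You instead take the bags $X_i$ to be the visible vertex sets and verify the edge-covering condition, the interval condition, and the width bound directly from the definition of lifespans; this is self-contained and avoids the black-box citation. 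The two arguments are close in substance---a vertex is visible at step $i$ exactly when it is counted in the vertex separation at position $i$ or appears there, so your bags are the ones implicit in Kinnersley's correspondence---but each has its own payoff: the paper's version is shorter and records the conceptually useful fact that $w(\mathcal{S})$ coincides with the vertex separation number of $\tau$, while yours can be checked from first principles. One pedantic note: property (i) of a \story only guarantees that $D_i$ contains \emph{all} visible vertices, so strictly you get $|X_i| \le |D_i|$ rather than equality; this still gives width at most $\omega$, which is all the lemma claims.
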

\begin{proof}
	It is easily seen that $\omega$ equals the vertex separation number of the total order $\tau$. Namely, suppose for a contradiction that there exists an index $i \in [n]$ such that there are at least $\omega+1$ vertices $v$ with $\tau(v) \le i$ having a neighbor $u$ such that $\tau(u) > i$. Let $u_1$ be the one of such neighbors that appears first. All the at least $\omega+1$ vertices are still visible at step $\tau(u_1)$ and therefore the size of the frame at step $\tau(u_1)$ is at least $\omega+2$, thus contradicting the fact that $\mathcal{S}$ has with $\omega$. On the other hand, a total order $\tau$ with vertex separation number $\omega$ implies the existence of a path decomposition of width at most $\omega$~\cite{DBLP:journals/ipl/Kinnersley92}. 
\end{proof}

\thpathwidth*
\begin{proof}
	If $G$ does not admit a \story, for any total order of its vertices, then $\pw(G) < \fw(G) = +\infty$.
	Otherwise, $\pw(G) \le \fw(G)$ by \cref{le:pathwidth}.
	
	If $G$ is planar, we show that $\fw(G) \le \pw(G)$. 
	Namely, let $\{X_i\}_{i \in [h]}$ be a path decomposition of $G$. 
	Without loss of generality, we can assume that this path decomposition is \emph{nice} (or normalized), i.e., $X_1$ contains exactly one vertex, and, for each bag $X_i$ with $1<i \le h$, there is a vertex $v$ such that either $X_i=X_{i-1} \cup \{v\}$ or $X_i=X_{i-1} \setminus \{v\}$. Then we have that at each step at most one vertex is introduced in the decomposition, and hence we can associate each vertex $v$ with an  index $l_v$ such that $l_v=1$ if $v$ is the first introduced vertex, and $l_v = l_u+1$, where $l_u$ is the last vertex introduced before $v$, otherwise. We define a total order $\tau$ of $V$ by setting $\tau(v)=l_v$. Let $\Gamma$ be a planar drawing of $G$. Consider the set of $n$ drawings such that $D_i$ corresponds to the subdrawing of $\Gamma$ induced by the vertices that are visible at step $i$, for each $i \in [n]$. One immediately verifies that $\langle \tau, \{D_i\}_{i\in [n]} \rangle$ is a \story of $G$ of width~$\pw(G)$.
\end{proof}

In order to prove \cref{le:bipartite-3}, we first prove a few auxiliary lemmas.

\begin{lemma}\label{le:bipartite-1}
	Let $K_{a,b} = (A \cup B, E)$ be a complete bipartite graph with $a = |A|$ and  $b=|B|$. 
	Let $\mathcal{S}=\langle \tau, \{D_i\}_{i\in [a+b]} \rangle$ be a \story of $K_{a,b}$. 
	There exists $i \in [a+b]$ such that all vertices of $A$ or $B$ are visible. 
\end{lemma}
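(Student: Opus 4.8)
The plan is to exploit the definition of the lifespan together with the fact that, in $K_{a,b}$, each vertex of one partite set is adjacent to \emph{every} vertex of the other. First I would single out the vertex $z$ that appears last, i.e.\ the unique vertex with $\tau(z)=a+b$. Since $\{A,B\}$ partitions the vertex set, $z$ lies in exactly one of the two parts; by symmetry assume $z \in B$.

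Next I would argue that the entire opposite part $A$ is visible precisely at the final step $i=a+b$. Fix any $v\in A$. Because $K_{a,b}$ is complete bipartite and $z\in B$, the vertex $z$ is a neighbor of $v$, so $z\in N[v]$. By the definition of the lifespan of $v$, its right endpoint satisfies $j_v=\max_{u\in N[v]}\tau(u)\ge \tau(z)=a+b$; since $a+b$ is the largest value attained by $\tau$, in fact $j_v=a+b$. Hence $a+b$ belongs to the lifespan $[i_v,j_v]$ of $v$, and so $v$ is visible at step $a+b$. As $v\in A$ was arbitrary, every vertex of $A$ is visible at step $i=a+b$, which is exactly the desired conclusion (with the roles of $A$ and $B$ interchanged in the symmetric case $z\in A$).

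I do not expect any genuine obstacle here: the statement follows directly from the observation, stressed in the definition of a \story, that a vertex cannot disappear before all of its neighbors have appeared, applied to the common neighbor $z$ of the whole part $A$. The only point requiring a line of care is to note that we need no property of the drawings $\{D_i\}$ themselves: the claim is purely about the total order $\tau$ and the induced lifespans, so planarity plays no role in this particular lemma and is reserved for the subsequent strengthening in \cref{le:bipartite-3}, where one additionally rules out the possibility that \emph{both} partite sets are fully visible at some (possibly different) steps.
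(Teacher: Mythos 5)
Your proof is correct, but it takes a genuinely different and more direct route than the paper. The paper's argument picks a frame $D_i$ containing the maximum number $t$ of vertices of $A$; if $t<a$, it extracts two vertices $u,v\in A$ with disjoint lifespans ($j_u<i_v$, implicitly a Helly-type fact about intervals), and then uses completeness of the bipartite graph to show that every vertex of $B$ has appeared by step $j_u$ but cannot disappear before $v$ appears, so all of $B$ is visible at step $j_u$. You instead look at the vertex $z$ with $\tau(z)=a+b$, say $z\in B$, and observe that every $v\in A$ has $z\in N[v]$, forcing $j_v=a+b$; hence all of $A$ is visible at the final step. Your argument is shorter, avoids any case distinction, and additionally pins down a concrete witness (the last step, and the side opposite to the last-appearing vertex), whereas the paper's proof only produces some step $j_u$. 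What the paper's bookkeeping buys is not needed for this lemma; its maximal-interval style of reasoning reappears in the proof of \cref{le:bipartite-3}, where one genuinely needs to locate a step inside the maximal window in which one side is fully visible. You are also right that planarity of the frames plays no role here: the statement is purely about $\tau$ and the induced lifespans, and planarity enters only in \cref{le:bipartite-2} and \cref{le:bipartite-3}.
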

\begin{proof}
	Let $i$ be such that $D_i$ contains the largest number $t$ of vertices of $A$ over all frames of $\mathcal{S}$. 
	If $t=a$, we are done. 
	If $t<a$, there exist two vertices $u,v$ of $A$ such that  $j_u < i_v$. Note that all vertices in $B$ are adjacent to $u$, and hence they all appear at some step smaller than or equal to $j_u$. On the other hand,  since all vertices in $B$ are adjacent to $v$ as well, they cannot disappear before $i_v+1$. It follows that all vertices of $B$ are visible at step $j_u$.
\end{proof}


\begin{lemma}\label{le:bipartite-2}
	Let $K_{a,b} = (A \cup B, E)$ be a complete bipartite graph with $a = |A|$ and  $b=|B|$. 
	Let $\mathcal{S}=\langle \tau, \{D_i\}_{i\in [a+b]} \rangle$ be a \story of $K_{a,b}$. 
	If a step $i \in [a+b]$ contains at least three vertices of $A$ (resp. $B$), then it contains at most two vertices of $B$ (resp. $A$).  
\end{lemma}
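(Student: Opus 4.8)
The plan is to argue by contradiction, exploiting the forbidden-subgraph characterisation of planarity. Suppose, for the sake of contradiction, that some frame $D_i$ simultaneously contains at least three vertices of $A$, say $a_1,a_2,a_3$, and at least three vertices of $B$, say $b_1,b_2,b_3$. The crucial observation is that, because $K_{a,b}$ is complete bipartite, every one of the nine pairs $a_pb_q$ (with $p,q\in\{1,2,3\}$) is an edge of $K_{a,b}$.

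First I would pin down exactly which edges appear in the single frame $D_i$. By the definition of a \story, each frame $D_i$ is a planar drawing of the subgraph of $K_{a,b}$ induced by the vertices visible at step $i$. Since all six vertices $a_1,a_2,a_3,b_1,b_2,b_3$ are visible at step $i$ by assumption, and every edge between an $a_p$ and a $b_q$ is present in $K_{a,b}$, all nine edges $a_pb_q$ are drawn in $D_i$. Hence the subdrawing $D_i[\{a_1,a_2,a_3,b_1,b_2,b_3\}]$ is a drawing of $K_{3,3}$.

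Next I would invoke the non-planarity of $K_{3,3}$. As $K_{3,3}$ is one of the two Kuratowski graphs, it admits no planar drawing, so the subdrawing of $D_i$ on these six vertices necessarily contains a crossing. But any subdrawing of a planar drawing is itself planar, which contradicts the requirement that $D_i$ be planar. This contradiction proves that no frame can contain three vertices of each side, which is exactly the claim. The parenthetical ``(resp.)'' statement then follows verbatim by exchanging the roles of $A$ and $B$.

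I do not expect any genuine technical obstacle here, since the argument is immediate once the forbidden $K_{3,3}$ is exposed. The only point that requires care is the edge-identification step: one must be certain that all nine edges are genuinely present in the \emph{same} frame $D_i$, rather than being spread across several distinct frames. This is guaranteed precisely because each frame is a drawing of the \emph{full} induced subgraph on its visible vertices, so whenever both endpoints of an edge are visible the edge is drawn in that frame.
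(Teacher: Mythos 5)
Your proof is correct and follows exactly the same approach as the paper, which argues in one line that a planar frame cannot contain $K_{3,3}$ as a subgraph. Your additional care about all nine edges being present in the \emph{same} frame is a valid and worthwhile elaboration of the point the paper leaves implicit, namely that each frame is a drawing of the full subgraph induced by the visible vertices.
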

\begin{proof}
	The statement immediately follows from the fact any frame of $\mathcal{S}$ is planar and hence cannot contain $K_{3,3}$ as a subgraph.  
\end{proof}

We are now ready to prove \cref{le:bipartite-3}.

\lebipartitethree*
\begin{proof}
	Consider the interval $I=[s,t] \subseteq [a+b]$ of maximal length such that the vertices of $A$ or $B$, say $A$, are all visible. 
	By \cref{le:bipartite-1}, $I$ is not empty.  Let $j$ be one of the steps that contain the largest number $h$ of vertices of $B$. Observe that, since $I$ is maximal, one vertex of $A$ appears at $s$. Therefore, any vertex of $B$ visible at a step smaller than $s$ is visible also at step $s$. Similarly, by the maximality of $I$, one vertex of $A$ disappears at step $t+1$, therefore any vertex of $B$ visible at a step greater than $t$ is visible also at step $t$. Consequently, we can assume that $j \in I$, and we can conclude $h \le 2$ by \cref{le:bipartite-2}.
\end{proof}

\clearpage

\section{Missing Proofs of \cref{se:storyp}}

\subsection{Missing proof of \cref{sse:hardness}}\label{ap:hardness}

\begin{figure}[t]
	\centering
	\includegraphics[scale=1,page=5]{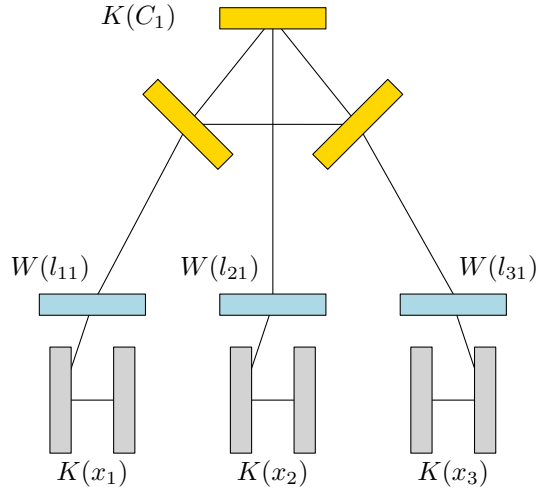}
	\caption{Illustration for \cref{th:hardness}. Schematization of $(x_1 \vee x_2 \vee \neg x_3)$.\label{fi:hard-2}}
\end{figure}

\lecorrectnessone*
\begin{proof}
	Let $\mathcal{S}=\langle \tau, \{D_i\}_{i \in [n]} \rangle$ be a \story of $G$. For each variable gadget $K(x_i)$ we assign the  value \emph{true} to $x_i$ if the v-side $A_i$  is flexible in $\mathcal{S}$. Consider any literal $l_{ij}$ and the wire gadget $W_{ij}$. If $l_{ij}$ is a positive (negative) literal, then $A_i$  ($B_i$) and $W_{ij}$ form a $K_{3,3}$, hence by \cref{le:bipartite-3} the w-side $W_{ij}$ is fixed (flexible). Analogously, if we consider the clause gadget $K(C_j)$, the c-side connected with $W_{ij}$ is flexible (fixed). Symmetrically, we assign the value \emph{false} to $x_i$ if the v-side $B_i$ is instead flexible in $\mathcal{S}$, and for any positive (negative) literal $l_{ij}$, the w-side $W_{ij}$ is flexible (fixed), while the corresponding c-side of $K(C_j)$ is fixed (flexible).  In other words, the value of $x_i$ propagates consistently throughout all its literals. It remains to prove that, for any clause $C_j$ of $\varphi$, precisely one literal is true for the constructed truth value assignment of $\{x_i\}_{i \in [N]}$. Namely, we claim that exactly one c-side of $K(C_j)$ is flexible, while the other two are fixed. 
	
	We first argue that not all c-sides can be fixed in $\mathcal{S}$. Assume, for a contradiction, that they are. Let $D_h$ be the frame of $\mathcal{S}$ in which the last simple vertex $v$ of $K(C_j)$ appears. Observe that all other simple vertices are also visible at step $h$. Namely, let $u \neq v$ be a simple vertex of $K(C_j)$. Either $u$ is adjacent to $v$ or in the same c-side as $v$. Hence, since $i_u < h$ by assumption and since $u$ cannot disappear until $h+1$, it follows that $u$ is visible at step $h$. It follows that $D_h$ contains a drawing of $K_{2,2,2}$, which is a maximal planar graph and hence has a unique planar embedding up to the choice of the outer face. In particular, vertex $v$ and the other vertex on the same c-side as $v$, which we call $v'$, do not belong to the same face. Consider now the special vertex $s$ adjacent to both $v$ and $v'$. Vertex $s$ cannot be visible at step $h$, else $D_h$ would not be planar, as it would contain $K_{2,2,2}$ plus a subdivided edge between $v$ and $v'$. On the other hand, it cannot have disappeared already, since $s$ and $v$ are adjacent. Hence, $s$ has not appeared yet. Similarly, no other special vertex is visible at step $h$, and since the special vertices are pairwise adjacent, it follows that none of them has appeared yet. Consequently, no vertex of $K_{2,2,2}$ can disappear until the first special vertex has appeared, which implies the existence of a frame that contains a planar drawing of $K_{2,2,2}$ plus a subdivided edge between two vertices in the same c-side, which contradicts the fact that the considered frame is planar. 
	
	We now rule out the case in which more than one c-side is flexible, which corresponds to a clause with more than one true literal. If a c-side is flexible, there can be no frame containing both its simple vertices. This follows from the fact that the special vertex is adjacent to both simple vertices, and hence there would be a frame containing all three vertices together, which is not possible since the c-side is flexible. On the other hand, when the last simple vertex appears, its four neighbors in $K_{2,2,2}$ are all visible, and hence at least two c-sides must be fixed. 
	
	Altogether, we have proved that at least two c-sides are fixed and that at least one c-side is flexible. Therefore, in each clause gadget, exactly one c-side if flexible, which corresponds to having exactly one true literal, as desired. 
\end{proof}
\lecorrectnesstwo*
\begin{proof}

	We avoid repeating the initial part of the proof, which is reported in the main body of the paper. Observe that, after having drawn the fixed v-sides and w-sides, each flexible v-side can be drawn independently, by letting appear and disappear its three vertices one by one. When a vertex appears, its edges towards the fixed w-sides of its wire gadgets and the fixed v-side of its variable gadget form a star; see \cref{fi:draw-flex-variable}. To conclude the proof, we first draw the clause gadgets alone, and then we show how to integrate in the \story the flexible w-sides and their connections, as well as the connections of the fixed w-sides with the corresponding c-sides.

	We now draw the clauses one by one. We begin by showing how to draw a single clause gadget $K(C_j)$, ignoring the connections with the linked wire gadgets. Refer to \cref{fi:draw-clause} for an illustration.  We first let appear the three special vertices of $K(C_j)$ (\cref{fi:draw-clause-a}). Now we let appear the two simple vertices of a false literal, which we denote by $l_1$ for convenience (\cref{fi:draw-clause-b}). Note that the whole c-side representing $l_1$ is visible. Right after, the special vertex connected to the two simple vertices can disappear. Next, we let appear the two simple vertices of the remaining false literal, called $l_2$, hence again this c-side is entirely visible (\cref{fi:draw-clause-c}). Again this can be done without crossings, and the corresponding special vertex can disappear right after. Now we have a $4$-cycle (with a light blue background in \cref{fi:draw-clause-c}) formed by the edges connecting the simple vertices drawn so far, and the special vertex of the missing literal lying in one of the two faces made by the cycle. We let one of the two simple vertices of the true literal appear (\cref{fi:draw-clause-d}), we draw its edges planarly, and afterwards we let it disappear and replace it with the last simple vertex (\cref{fi:draw-clause-e}). Once the last simple vertex has appeared, all vertices of the clause can disappear. Hence, the last c-side hasn't appeared together but instead acted as a flexible c-side.

	\begin{figure}[p]
		\centering
		\subfigure[]{\includegraphics[scale=0.7,page=15]{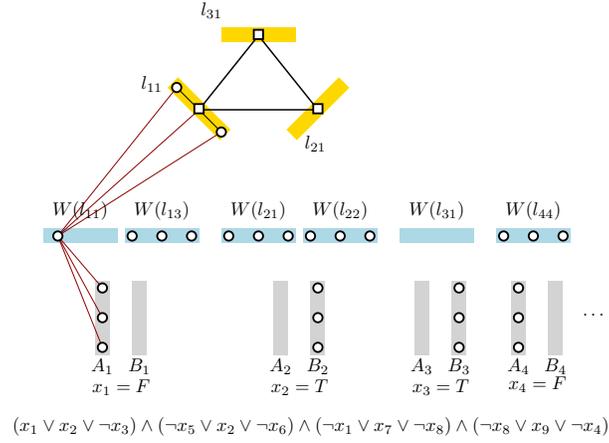}\label{fi:draw-clause-wire-var-a}}
		\subfigure[]{\includegraphics[scale=0.7,page=14]{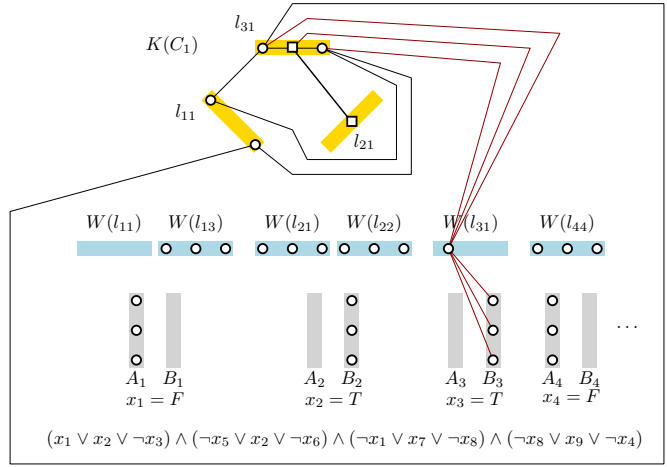}\label{fi:draw-clause-wire-var-b}}
		\caption{Proof of \cref{le:correctness-2}: drawing the false literals.}
	\end{figure}

	\begin{figure}
		\centering
		\includegraphics[scale=0.7,page=16]{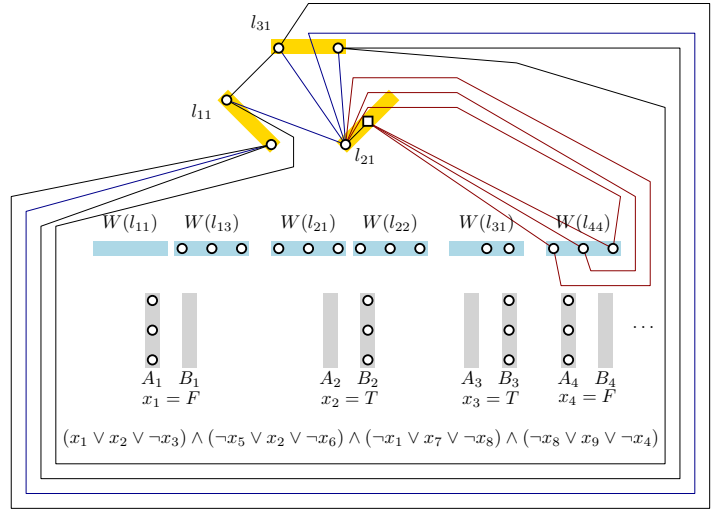}
		\caption{Proof of \cref{le:correctness-2}: drawing the true literal.\label{fi:draw-clause-wire-var-c}}
	\end{figure}
	
	Consider a wire gadget $W(l_{ij})$. Suppose first that $l_{ij}$ is false. Then we have seen that its c-side in $K(C_j)$, say $U$, is fixed and hence $W(l_{ij})$ must be flexible. Note that both if $l_{ij}=l_1$  (\cref{fi:draw-clause-wire-var-a}) or $l_{ij}=l_2$ (\cref{fi:draw-clause-wire-var-b}), in the frame in which the three vertices of $U$ are visible there is a face whose boundary contains the three vertices of the corresponding c-side. We can assume that this face encloses the drawings of all the w-sides and v-sides that are currently visible. Hence the vertices of $W(l_{ij})$ can be drawn one by one in this face. If $l_{ij}$ is true, then its c-side $U$ is flexible and hence $W(l_{ij})$ must be fixed. Again in the two frames in which the vertices of $U$ appear (see \cref{fi:draw-clause-d,fi:draw-clause-e}), the drawing of the clause has a face whose boundary contains the vertex of $U$ visible in that frame. Again we can assume that this face encloses the drawings of all the w-sides and v-sides that are currently visible, and therefore the vertices of $U$ can be connected planarly both to its neighbors in $K(C_j)$ and to its neighbors in $W(l_{ij})$ (see \cref{fi:draw-clause-wire-var-c}). By repeating this procedure for each clause we complete the \story.
\end{proof}

\thhardness*
\begin{proof}
	Constructing the graph $G$ from the formula $\varphi$ clearly takes polynomial time, and the correctness of the reduction follows from \cref{le:correctness-1,le:correctness-2}. This proves that \storyp is \NP-hard.
	
	For the second part of the statement, observe that the \emph{\ethlong} (\ethshort), combined with the Sparsification Lemma, states that \sat formulas on $N$ variables and $M$ clauses cannot be solved in $2^{o(N+M)}$~time, see~\cite{DBLP:journals/jcss/ImpagliazzoP01,DBLP:journals/jcss/ImpagliazzoPZ01}. On the other hand, there exists a polynomial-time reduction from \sat to \onesat that transforms \sat instances with $N+M$ variables and clauses into equivalent \onesat instances with $O(N+M)$ variables and clauses~\cite{DBLP:conf/stoc/Schaefer78}.  Finally, starting from an instance of \onesat, our reduction constructs a graph with $n$ vertices and $m$ edges such that $m=O(n)$ and $n+m=O(N+M)$. Therefore, an algorithm solving \storyp in $2^{o(n)}$ time would contradict ETH.
\end{proof}

\thupper*
\begin{proof}
	The key observation to claim that the problem belongs to \NP lies in the fact that, since we consider topological drawings in which edges are Jordan arcs, once a total order of the vertices is fixed, the existence of a \story is equivalent to the existence of a sequence of planar embeddings that are ``consistent''. Namely, for any two such planar embeddings, if they share a common subgraph, then their restrictions to this subgraph must coincide. With this observation, we can guess a candidate solution and verify, in polynomial time, if it is valid. However, we intertwine the generation process of the solutions and their validity check in order to obtain a more efficient time complexity. 
	
	More formally, we proceed as follows. We first guess a linear order of the vertices of $G$. Note that there are $n! \in 2^{O(n \log n)}$ such orders. For each $i \in [n]$, we identify the set of vertices $V_i$ visible at step $i$ and the corresponding set of edges $E_i$. By planarity, it must be $|E_i| \le 3|V_i|-6$, else we can immediately reject the instance. As already observed, we do not guess a drawing $D_i$, but instead a planar embedding $\mathcal{E}_i$ for graph $G_i=(V_i,E_i)$. 
	
	If $G_i$ is connected, the planar embedding $\mathcal{E}_i$ can be described by the circular order of the edges around each vertex, called \emph{rotation system}, and by the choice of the outer face. Indeed, with this information, one can reconstruct the boundary of each face of $G_i$ (in linear time). If $G$ is not connected, we also need the pairwise relative positions of distinct connected components (if more than one). Let $C_1,\dots,C_q$ be the connected components of $G_i$ and let $\mathcal{E}_i(C_j)$ be the restriction of $\mathcal{E}_i$ to $C_j$ for each $j \in [q]$. The \emph{position system} is a rooted tree whose vertices are the faces of the embeddings  $\mathcal{E}_i(C_j)$ for each $j \in [q]$, and a node $\nu$ has a parent $\mu$ in this tree if and only if the face corresponding to $\nu$ contains the face corresponding to $\mu$ in its interior and $\mu$ is the outer face of its connected component. 
	
	\noindent The number of possible rotation systems of $G_i$, denoted by $n_{\textrm{rot}}$, is upper bounded by the number of possible permutations of edges around each vertex.  Thus we have 
	
	$$n_{\textrm{rot}} \le \prod_{v \in V_i} {\textrm{deg}(v)!} < \prod_{v \in V_i} {\textrm{deg}(v)^{\textrm{deg}(v)}} \in n^{O\left(\sum_{v \in V_i}{\textrm{deg}(v)}\right)} \subseteq n^{O(n)}.$$

	\noindent Each rotation system of $G_i$ fixes the boundary of each face of each connected component of $G_i$. Hence we can verify (in linear time) whether the rotation system yields a planar combinatorial (up to the choice of the outer face) embedding of each connected component of $G_i$. If this is the case, there are $O(n)$ possible outer faces for each connected component, we have $O(n^2)$ possible choices over all connected components. Once both the rotation system and the outer face of each connected component have been fixed, the number $n_{\textrm{pos}}$ of possible position systems is $n_{\textrm{pos}} \le n^{O(n)}$, since it suffices to guess the parent of each node of the tree. Therefore, for each  step $i>1$, there are $O(n^2) \cdot n_{\textrm{rot}} \cdot n_{\textrm{pos}} \subseteq n^{O(n)} \subseteq 2^{O(n \log n)}$  distinct embeddings that we can guess. 
	
	Once we have generated all planar embeddings for each $i \in [n]$, we aim at finding a sequence (if any) of planar embeddings, one for each step $i$, in which common subgraphs have the same embedding throughout the sequence. Namely, for each $i>1$ and for each planar embedding $\mathcal{E}_i$, we verify whether there is at least one planar embedding $\mathcal{E}_{i-1}$ such that the restrictions of $\mathcal{E}_i$ and $\mathcal{E}_{i-1}$  to the common graph $G_{i-1} \cap G_i$ are the same. If this is the case, we keep $\mathcal{E}_i$, else we discard it. The algorithm halts if the set of planar embeddings becomes empty, else it outputs at least one valid sequence. Overall, the procedure takes $2^{O(n \log n)} \cdot 2^{O(n \log n)}= 2^{O(n \log n)}$ time.
\end{proof}

\subsection{Missing proofs of \cref{sse:vc}}\label{apx:vcn}

\lekernel*
\begin{proof}
	One direction is trivial, since it is easily seen that admitting a \story is a hereditary property. 
	Suppose now that $G'$ admits a \story $\mathcal{S'}=\langle \tau', \{D'\}_{i\in [n']}\rangle$, where $n'=n-1$. 
	We distinguish three cases based on the reduction rule applied to $G$. In each case, we denote by $x$ the vertex removed from $G$ to obtain $G'$.
	
	\textbf{Case A (R.1).} See \cref{fi:fpt-1} for an illustration. Let $v$ be the neighbor of $x$ in $G$, whose lifespan according to $\tau'$ is $[i_v,j_v]$. We compute $\tau$ from $\tau'$ by inserting $x$ right after $v$, consequently the lifespan of $x$ in $\tau$ is $[i_v+1,i_v+1]$. Similarly, compute $\{D_i\}_{i\in [n]}$ from $\{D'\}_{i\in [n']}$ as follows. For each $i \le i_v$, we set $D_i = D'_i$. For $i=i_v+1$, we draw $x$ in $D'_{i_v}$ sufficiently close to $v$ such that $xv$ can be drawn as a straight-line segment that does not intersect any other edge. We then set $D_i$ to be equal to the resulting drawing. Finally, for each $i>i_v+1$, we set $D_i = D'_{i-1}$.
	
	\begin{figure}
		\centering
		\includegraphics[width=\textwidth,page=2]{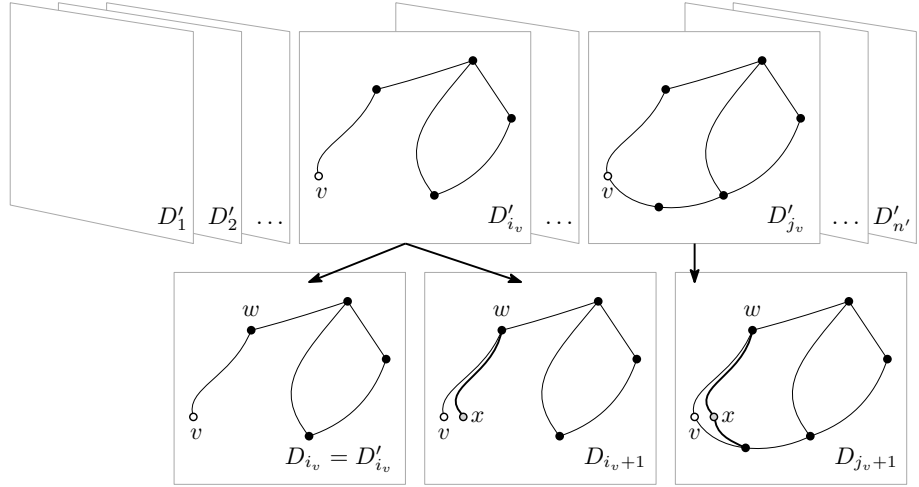}
		\caption{Illustration for \textbf{Case B} of the proof of \cref{le:kernel}.\label{fi:fpt-2}}
	\end{figure}
	
	\textbf{Case B (R.2).} See \cref{fi:fpt-2} for an illustration. By assumption, $G'$ contains at least one vertex $v \neq x$ of type $U$, whose lifespan according to $\tau'$ is $[i_v,j_v]$. We compute $\tau$ from $\tau'$ by inserting $x$ right after $v$; consequently, the lifespan of $v$ in $\tau$ is $[i_v+1,j_v]$. For each $i \le i_v$, we set $D_i = D'_i$. For $i=i_v+1$, we extend $D'_i$ by drawing $x$ sufficiently close to $v$ and by drawing, for each neighbor $w$ of $x$, the edge $xw$ such that it follows the curve representing the edge $vw$. Since $vw$ does not cross any other edge and $v$ has degree two, the same holds for $xw$.  We then set $D_i$ to be equal to the resulting drawing. Similarly, for each $i \le [i_v+2,j_v+1]$, we extend (if needed) the frame $D'_{i-1}$ by drawing any edge $xw$ such that it follows the corresponding curve $vw$. Finally, for each $i>j_v+1$, we set $D_i = D'_{i-1}$.
	
	\textbf{Case C (R.3).} Observe that, by assumption, the graph induced by the vertices of $U \cup V_U$ contains a complete bipartite graph $K_{a,b}$ with $a \ge 3$ and $b \ge 3$, whose partite sets are $U$ and $V_U$. We distinguish two subcases based on whether $U$ is the fixed set or the flexible set of $K_{a,b}$. 
	
	\begin{figure}
		\centering
		\includegraphics[width=\textwidth,page=3]{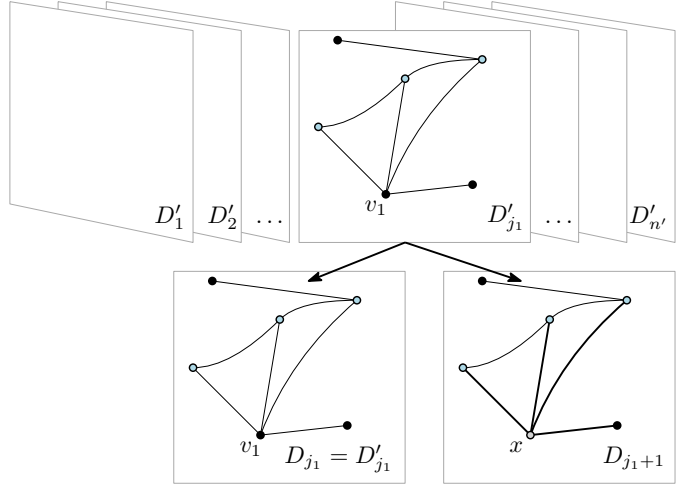}
		\caption{Illustration for \textbf{Case C} of the proof of \cref{le:kernel}, when $U$ is the fixed set.\label{fi:fpt-3a}}
	\end{figure}
	
	Suppose first that $U$ is the fixed set; see \cref{fi:fpt-3a} for an illustration.  Let $I \in [n']$ be the interval in which all vertices of $U$ are visible. By assumption, $G'$ contains at least three vertices $v_j \neq x$ of $V_U$, with $j=1,2,3$. Observe that the lifespan of each vertex $v_j$ intersects $I$, therefore there are at least two vertices, say $v_1$ and $v_2$, whose lifespans do not intersect, otherwise there would be a frame containing $K_{3,3}$. Let $[i_1,j_1]$ and $[i_2,j_2]$ be the lifespan of $v_1$ and $v_2$, respectively, and suppose (without loss of generality) that $j_1 < i_2$. Observe that both $j_1$ and $i_2$ are in $I$.  We compute $\tau$ from $\tau'$ by inserting $x$ right after $v_1$ disappears, such that its lifespan in $\tau$ is $[j_1+1,j_1+1]$. 
	For each $i \le j_1$, we set $D_i = D'_i$. For $i=j_1+1$,  we take $D'_{j_1}$ and replace the drawing of $v_1$ with the drawing of $x$. Namely, we place $x$ on the same point of $v_1$ and we draw each curve $xw$ by following the curve $v_1w$. Since $v_1w$ does not cross any other edge, the same holds for $xw$. The resulting drawing is $D_i$. Finally, for each $i>j_1+1$, we set $D_i = D'_{i-1}$.
	
	\begin{figure}
		\centering
		\includegraphics[width=\textwidth,page=4]{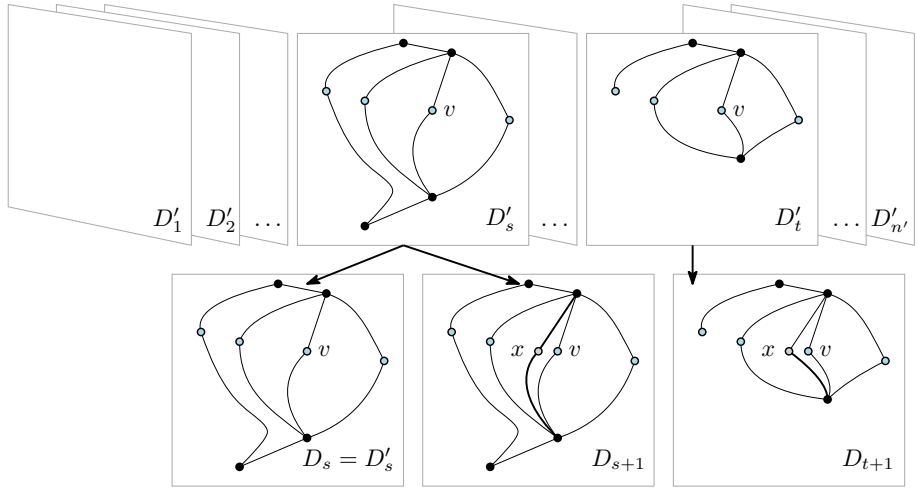}
		\caption{Illustration for \textbf{Case C} of the proof of \cref{le:kernel}, when $V_U$ is the fixed set.\label{fi:fpt-3b}}
	\end{figure} 
	
	Suppose now that $V_U$ is the fixed set; see \cref{fi:fpt-3b} for an illustration. Let $[s,t] \in [n']$ be the maximal interval in which all vertices of $V_U$ in $G'$ are visible.  Let $v \neq x$ be the vertex of $V_U$ such that $\tau'(v)=s$. We compute $\tau$ from $\tau'$ by inserting $x$ right next $v$ such that its lifespan in $\tau$ is $[s+1,t+1]$. For each $i \le s$, we set $D_i = D'_i$. For $i=s+1$,  we extend $D'_i$ by drawing $x$ sufficiently close to $v$ and by drawing, for each neighbor $w$ of $x$, the edge $xw$ such that it follows the curve representing the edge $vw$. Note that  in each frame $D'_i$ with $i \in [s,t]$ there are at most two vertices of $U$, and that there are no edges between vertices of $V_U$. Hence, similarly as in \textbf{Case B}, vertex $v$ has degree two and the curve $xw$ does not cross any other edge.  We then set $D_i$ to be equal to the resulting drawing. Similarly, for each $i \le [s+2,t+1]$, we extend (if needed) the frame $D'_{i-1}$ by drawing any edge $xw$ such that it follows the corresponding curve $vw$. Finally, for each $i>t+1$, we set $D_i = D'_{i-1}$.
\end{proof}

\thfptvc*
\begin{proof}
	By~\cite{DBLP:journals/tcs/ChenKX10}, we can determine the vertex cover number $\vc=\vc(G)$ of $G$ and compute a vertex cover $C$ of size $\vc$ in time $O(2^\vc+\vc\cdot n)$.  We construct a kernel $G^*$ from $G$ of size $O(2^{\vc})$ as follows. We first classify each vertex of $G$ based on its type. We then apply reduction rules \textbf{R1}, \textbf{R2}, and \textbf{R3} exhaustively.  Thus, constructing $G^*$ can be done in $O(2^\vc + \vc \cdot n)$ time, since $O(2^\vc)$ is the number of types and $\vc \cdot n$ is the maximum number of edges of $G$. Also, $G^*$ contains $n^* \le 3 \cdot 2^\vc < 2^{\vc+2}$ vertices.

	From \cref{le:kernel} we conclude that $G$ admits a \story if and only if $G^*$ does. To establish whether $G^*$ admits a \story we proceed as follows: (1) We guess a total order $\tau^*$ of $G^*$; (2) For $i=1$, we guess all planar embeddings of the graph induced by the vertices visible at step $i$; (3) For each $i>1$, we consider the embeddings computed at the previous step $i-1$, we remove from them the vertices (if any) that disappear at step $i$, we remove possible duplicates, and we try to exhaustively extend each of the resulting planar embeddings by inserting the vertex that appears at step $i$. The algorithm halts if the set of planar embeddings becomes empty.  It is readily seen that $G^*$ admits a \story if and only if the algorithm terminates at step $n^*$ with at least one planar embedding. Concerning the time complexity, step (1) takes $O(2^{\vc+2}!)$ time. Since $G^*$ contains $O(2^{O(\vc)})$ vertices and edges, the number of possible planar embeddings are $O((2^{O(\vc)})^{(2^{O(\vc)})})= O((2^{2^{O(\log \vc)}\cdot 2^{O(\vc)}})= O(2^{2^{O(\vc)}})$. Hence step (2) takes $O(2^{2^{O(\vc)}})$ time and step (3) takes $O(2^{2^{O(\vc)}}) \cdot O(2^{2^{O(\vc)}}) = O(2^{2^{O(\vc)}})$ time. Starting from a \story of $G^*$, we can reinsert the missing $O(n)$ vertices each in $O(n)$ time, as detailed in \cref{le:kernel}.
\end{proof}

\subsection{Missing proofs of \cref{sse:fes}}\label{ap:fes}

In order to prove \cref{le:kernel-fes-size}, we first show that \textbf{R.A} and \textbf{R.B} are safe rules.

\begin{lemma}\label{le:kernel-fes}
	Let $G'$ be the graph obtained from $G$ by applying one of the reduction rules \textbf{R.A} and \textbf{R.B}. 
	Then $G$ admits a \story if and only if $G'$ does. 
\end{lemma}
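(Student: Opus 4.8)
The plan is to establish both directions of the equivalence. The forward direction is common to \textbf{R.A} and \textbf{R.B}: since in each rule $G'$ is an \emph{induced} subgraph of $G$, and admitting a \story is a hereditary property (as already noted in the proof of \cref{le:kernel}), restricting the total order and every frame of a \story of $G$ to $V(G')$ yields a \story of $G'$. Indeed, deleting vertices can only shrink each surviving vertex's neighborhood, hence its lifespan, and subdrawings of planar drawings are planar. So it remains to treat the backward direction of each rule.

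For \textbf{R.A} the removed vertex $x$ has degree one, so this is verbatim \textbf{Case A} of the proof of \cref{le:kernel}: starting from a \story of $G'$, I would insert $x$ immediately after its unique neighbor $v$ in the order and draw $x$ sufficiently close to $v$, so that the single edge $xv$ is crossing-free and $x$ has a one-step lifespan.

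The crux is the backward direction of \textbf{R.B}. Write the $k$-chain as $v_0,v_1,\dots,v_k,v_{k+1}$, where the removed inner vertices $v_1,\dots,v_k$ each have degree two (so they are adjacent only to their two chain-neighbors), and let $\mathcal{S}'$ be a \story of $G'$; note that the endpoints $v_0,v_{k+1}$ still belong to $G'$. I would construct a \story of $G$ as follows: keep the order of the vertices of $G'$ as in $\mathcal{S}'$; insert $v_1$ right after $v_0$ and $v_k$ right after $v_{k+1}$; and place the remaining inner vertices $v_2,\dots,v_{k-1}$ at the very end of the order---a set that is non-empty exactly because $k \ge 3$. With this order, $v_0$ and $v_{k+1}$ keep essentially their original (short) lifespans, so we never redraw them amid unrelated later structure; meanwhile $v_1$ and $v_k$ first appear as a pendant of $v_0$ and of $v_{k+1}$, respectively, and then---once that endpoint disappears---persist as \emph{isolated} vertices, kept alive only by their not-yet-drawn neighbors $v_2$ and $v_{k-1}$. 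Throughout this phase each frame is a frame of $\mathcal{S}'$ plus at most two pendant/isolated vertices, which preserves planarity. Crucially, once the order reaches the appended block, all vertices of $G'$ have disappeared, so only the two isolated points $v_1,v_k$ remain; the appended vertices then draw the path $v_1 v_2 \cdots v_k$ in an otherwise empty plane, while the two edges $v_0 v_1$ and $v_k v_{k+1}$ were already drawn, in earlier frames, when their endpoints were still alive. Hence the chain is realized piece by piece across the \story rather than all at once.

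The step I expect to be the main obstacle is exactly the one that makes a direct re-insertion fail: the chain must be drawn as a planar arc between $v_0$ and $v_{k+1}$, yet in $\mathcal{S}'$ these two vertices may never lie on a common face---for instance when $G'$ is a maximal planar graph and $v_0 v_{k+1}$ is its unique non-edge, no frame admits such an arc. The idea that circumvents this, and the reason \textbf{R.B} needs $k \ge 3$, is to postpone the completion of the chain until the appended block, by which time every competing vertex has vanished and $v_1,v_k$ are isolated, so co-faciality is automatic. The remaining care is bookkeeping: checking consistency of the two frames straddling the transition into the appended block, and confirming that the pendant and isolated vertices can be carried along without ever forcing a crossing---both of which rely on the drawings being topological, so that a single point can always be kept in the interior of a face and every new edge can be routed around it.
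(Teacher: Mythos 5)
Your proposal is correct and follows essentially the same route as the paper's proof: for \textbf{R.B}, the paper likewise inserts the first and last inner vertices immediately after their respective chain endpoints (as pendants drawn close to them), appends the remaining inner vertices $u_2,\dots,u_{k-1}$ at the very end of the order, and completes the path there, when only isolated chain vertices remain visible and planarity is automatic. Your identification of why $k\ge 3$ is needed (postponing the chain's completion so that the two anchored pendants never have to be joined while other vertices are alive) matches the paper's reasoning exactly.
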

\begin{proof}
	One direction is again trivial, since admitting a \story is a hereditary property. 
	Suppose now that $G'$ admits a \story $\mathcal{S'}=\langle \tau', \{D'\}_{i\in [n']}\rangle$, where $n'<n$. 
	We distinguish two cases based on the reduction rule applied to $G$. 
	
	\textbf{Case 1 (R.A). } The argument is analogous as for rule \textbf{R.1}. Let $u$ be the removed vertex and let $v$ be its neighbor in $G$, whose lifespan according to $\tau'$ is $[i_v,j_v]$. We compute $\tau$ from $\tau'$ by inserting $u$ right after $v$, consequently the lifespan of $v$ in $\tau$ is $[i_v+1,i_v+1]$. Similarly, compute $\{D_i\}_{i\in [n]}$ from $\{D'\}_{i\in [n']}$ as follows. For each $i \le i_v$, we set $D_i = D'_i$. For $i=i_v+1$, we draw $u$ in $D'_{i_v}$ sufficiently close to $v$ such that $uv$ can be drawn as a straight-line segment that does not intersect any other edge. We then set $D_i$ to be equal to the resulting drawing. Finally, for each $i>i_v+1$, we set $D_i = D'_{i-1}$.
	
	\textbf{Case 2 (R.B). } Let $\langle s,u_1,\dots,u_k,t \rangle$ be the $k$-chain whose inner vertices, $u_1,\dots,u_k$, have been removed. Let $[i_s,j_s]$ and $[i_t,j_t]$ be the lifespans of $s$ and $t$ according to $\tau'$, respectively. Without loss of generality, we assume $i_s < i_t$. We compute $\tau$ from $\tau'$ as follows. We insert $u_1$ right after $s$ and $u_k$ right after $t$. Now the order has length $n'+2$.  For each $i\in\{2,\dots,k-1\}$ (recall that $k \ge 3$), we insert $u_i$ at step $[n'+i+1]$. Consequently, the lifespans of $u_1$ and $u_k$ in $\tau$ are $[i_s+1,n'+3]$ and $[i_t+1,n'+k]$, respectively,  the lifespan of $u_i$ is $[ n'+i+1,n'+i+2]$, for each $i\in\{2,\dots,k-2\}$, and the lifespan of $u_{k-1}$ is $[n'+k,n'+k]$. For each $i \le i_s$, we set $D_i = D'_i$. For $i=i_s+1$, we draw $u_1$ in $D'_{i_s}$ sufficiently close to $s$ such that $su_1$ can be drawn as a straight-line segment that does not intersect any other edge. Analogously, for $i=i_t+1$, we draw $u_{k}$ in $D'_{i_t}$ sufficiently close to $t$ such that $u_kt$ can be drawn as a straight-line segment that does not intersect any other edge. Recall that $k \ge 3$ and hence edge $u_1u_k$ does not exist, which implies that all frames up to $D_{n'+2}$ are planar. For each $i \in \{n'+3,n'+k\}$ we  have that vertex $u_j$, with $j=i-n'-1$ appears, and the only visible vertices are $u_{j-1}$, $u_{j}$, and $u_k$. In particular, the only edges to be drawn are $u_{j-1}u_j$ and $u_ju_k$ if $j=k-1$ (i.e., if $i=n'+k$), hence they can be realized as crossing-free straight-line segments for any position of $u_j$ (avoiding the three vertices being collinear). 
\end{proof}

\lekernelfessize*
\begin{proof}
	Let $G$ be a graph with $n$ vertices and let $F$ be a feedback edge set of $G$ of size $\fes=\fes(G)$.
	Let $G^*$ be the graph obtained after applying exhaustively the reduction rules \textbf{R.A} and \textbf{R.B}.  Graph $G^*$ is a kernel of $G$ by \cref{le:kernel-fes}.
	Let $F^* \subseteq F$ be the edges of $F$ that belong to $G^*$. Observe that removing the edges of $F^*$ from $G^*$ yields a forest $H^*$ with at most $2\fes$ leaves in total. Indeed, any leaf of $H^*$ is an endpoint of some edge in $F^*$, as otherwise it would not be part of $G^*$ by \textbf{R.A}. Moreover, by \textbf{R.B}, $G^*$ (and hence $H^*$) contains a $k$-chain only if $k<3$. Consequently, $H^*$ contains at most $16\fes -4$ vertices, namely at most $2\fes$ leaves, $(2\fes-1)$ vertices of degree larger than two, and $3(4\fes-1)$ vertices of degree two. Therefore, $G^*$ has $O(\fes)$ vertices and edges.
\end{proof}

\subsection{Missing Proofs for \cref{sse:threetrees}}\label{ap:3tree}
\ththreetrees*
\begin{proof}

	\begin{figure}[htbp]
		\centering
		\includegraphics[width=0.45\linewidth,page=4]{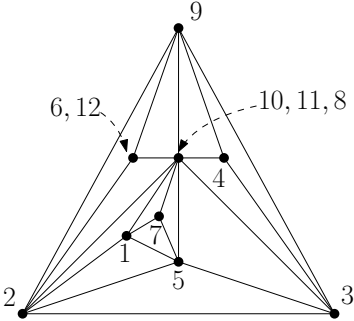}
		\caption{The graph $G^*$ uf the $3$-tree $G$ of \Cref{fi:3-trees}; each vertex $v$ of $G^*$ is labeled with the vertices of $G$ whose representative vertex is $v$. \label{fi:3-trees-d}}
	\end{figure}
	
	We avoid repeating the part of the proof described in the main body of the paper. We only prove that the described \story can be computed in linear time. If $G$ is a partial $3$-tree it can be augmented to a (non-partial) $3$-tree in $O(n)$ time~\cite{MATOUSEK19911}. The reverse of the order in which the vertices are added to construct $G$ is called a \emph{perfect elimination ordering} and can be computed in $O(n)$ time~\cite{doi:10.1137/0205021}. Starting from this order we can compute $\T$ in $O(n)$ time. The order $\tau$ can be computed in $O(n)$ time by performing a preorder visit of $\T$. 
	During this visit we also compute some information that are useful to compute efficiently all the drawings $\{D_i\}_{i\in [n]}$. First, we compute an embedded planar $3$-tree $G^*$ that is a supergraph of each $G_i$ (refer to \Cref{fi:3-trees-d}). The idea to construct $G^*$ is to keep only one representative, for each set of vertices whose parent cliques have the same set of representatives. We construct $G^*$ starting with the initial $3$-cycle (which is associated with the root); each vertex of this $3$-cycle is its own representative. When we visit a bag $\mu$, we consider the three representatives of the vertices of the parent clique $C$ of $v_{\mu}$; these three vertices can be obtained in $O(1)$ time by storing for each vertex its representative. If a vertex $v$ connected to the representative vertices of $C$ already exists, we do not modify $G^*$ and store $v$ as the representative vertex of $v_{\mu}$; if no vertex has already been connected to the representative vertices of $C$, we add a new vertex $v$ to $G^*$ connecting it to the representative vertices of $C$ and, also in this case, we store $v$ as the representative vertex of $v_{\mu}$. Since both cases can be handled in $O(1)$ time, the overall construction of $G^*$ can be executed in $O(n)$ time. Also, $G^*$ has $O(n)$ vertices and edges.
	Second, we store for each $i$ (with $i=1,2,\dots,n$) the list $\ell_i$ of vertices that have to be removed when constructing $D_i$ from $D_{i-1}$. These vertices are all the vertices associated with the bags of $\T$ that are visited before $\mu_i$ and after its parent $\mu_j$ (this set of vertices may be empty) and they can be stored, in the reverse order of addition, while backtracking from $\mu_{i-1}$ during the preorder visit.
	
	Once we have $G^*$ we can use any existing $O(n)$-time algorithm to compute a planar straight-line grid drawing $D^*$ of $G^*$ (for example the shift algorithm~\cite{CHROBAK1995241}). By the construction of $G^*$, each drawing $\{D_i\}_{i\in [n]}$ is a sub-drawing of $D^*$ and therefore we can compute each $D_i$ by removing from $D^*$ the vertices and edges that do not belong to $G_i$. In particular at each step $i$ we can use the list $\ell_i$ to find the vertices to be removed from $D_{i-1}$; these vertices are removed in reverse order of addition and when a vertex is removed we also remove the three edges that connect it to its parent clique. Since each vertex and each edge is added once and removed once, the set $\{D_i\}_{i\in [n]}$ can be computed in $O(n)$ time.
\end{proof}

\clearpage

\section{Missing Proofs for \cref{se:storypfo}}

\thfohardness*
\begin{proof}
	Suppose first that $G_1, G_2, G_3$ admit a SEFE $\Gamma_1,\Gamma_2,\Gamma_3$. Let $\Gamma$ be the drawing obtained as follows: (1) let $\Gamma'$ be the union of $\Gamma_1,\Gamma_2,\Gamma_3$; (ii) let $\Gamma''$ be the drawing obtained from $\Gamma'$ by subdividing all exclusive edges in $E_i$ for each $i=1,2,3$ (the subdivision vertex can be placed at any point along the curve representing the edge); (iii) finally, let $\Gamma$ be the drawing obtained from $\Gamma''$ by drawing for each vertex $v$ its spectator $s_v$ sufficiently close such that the edge $vs_v$ can be realized as a straight-line segment that does not intersect any other edge.  Then a \story $\langle \tau, \{D_i\}_{i\in [n]} \rangle$ can be obtained by setting each drawing $D_i$ equal to the subdrawing of $\Gamma$ induced by the vertices visible at step $i$. By construction, the position of a vertex  is the same over all frames that contain it. It remains to prove that each frame is planar.
	
	\begin{itemize}
		
		\item Each frame $D_i$ with $i \in [\tau^-_1,\tau^+_1]$ is a subdrawing of $\Gamma[V]$, which is planar because it contains only the edges that are part of the common intersection of $G_1,G_2,G_3$. 
		
		\item Each frame $D_i$ with $i \in [\tau^-_2,\tau^+_3]$ is planar because it is a subdrawing of $\Gamma_1$ in which each edge has been subdivided and the spectators of such subdivision vertices have been drawn without introducing crossings. 
		
		\item Observe that in $\tau^-_4$, all spectators of the subdivision vertices of $E_1$ disappear and hence the same holds for the subdivision vertices of $E_1$. It follows that each frame $D_i$ with $i \in [\tau^-_4,\tau^+_5]$ is planar because it is a subdrawing of $\Gamma_2$ in which each edge has been subdivided and the spectators of such subdivision vertices have been drawn without introducing crossings.  The same argument (with respect to $E_3$) can be used to prove the planarity of each frame $D_i$ with $i \in [\tau^-_6,\tau^+_7]$.
		
		\item In $\tau^-_8$ all subdivision vertices of $E_3$ disappear and  again each frame $D_i$ with $i \in [\tau^-_8,\tau^+_8]$ is a subdrawing of $\Gamma[V]$ where some spectators of the vertices in $V$ have been drawn without introducing crossings. 
		
	\end{itemize}

	Suppose now that $\langle G, \tau \rangle$ admits a storyplan $\langle \tau, \{D_i\}_{i\in [n]}\rangle$. Then a SEFE $\Gamma_1,\Gamma_2,\Gamma_3$ of $\langle G_1, G_2, G_3 \rangle$ can be obtained as follows. Let $\Gamma^*_1 = \bigcup_{i=\tau^-_1}^{\tau^+_2} D_i$, let $\Gamma^*_2 = \bigcup_{i=\tau^-_4}^{\tau^+_4} D_i$, and let $\Gamma^*_3 = \bigcup_{i=\tau^-_6}^{\tau^+_6} D_i$. Observe that, for each $i=1,2,3$, $\Gamma^*_i$ is a planar drawing of $G_i$ in which the exclusive edges of $E_i$ are subdivided. Moreover, by construction it holds $\Gamma^*_1[V] = \Gamma^*_2[V] = \Gamma^*_3[V]$. It follows that $\Gamma_i$ can be obtained by viewing each subdivision vertex as an inner point of the corresponding edge. 
	
	The membership to \NP~follows the lines of \cref{th:upper}, simply ignoring the initial guess of a total order, which is instead given as part of the input.
\end{proof}

\end{document}